\documentclass{article}%
\usepackage{amsmath}
\usepackage{amsfonts}
\usepackage{amssymb}
\usepackage{graphicx}%
\newtheorem{theorem}{Theorem}
\newtheorem{lemma}{Lemma}
\newtheorem{proposition}{Proposition}
\newtheorem{remark}{Remark}

\newcommand{\Cu}{c_1}
\newcommand{\Cd}{c_2}
\newcommand{\Ct}{c_3}
\newcommand{\Cq}{c_4}
\newcommand{\Cc}{c_5}
\newcommand{\Cs}{c_6}

\newcommand{\E}{\mathbb{E}}
\newcommand{\V}{\mathbb{V}}

\newcommand{\PP}{\mathbb{P}}

\newenvironment{proof}[1][Proof]{\noindent\textbf{#1.} }{\ \rule{0.5em}{0.5em}}

\title{Frontier estimation with local polynomials and high power-transformed data}

\author{S\'ephane Girard$^{(1)}$ \& Pierre Jacob$^{(2)}$}
\date{\small
$^{(1)}$team Mistis, INRIA Rh\^one-Alpes \& LJK, Inovall\'ee, 655, av. de l'Europe, Montbonnot, 38334 Saint-Ismier cedex, France, {\tt Stephane.Girard@inrialpes.fr}\\
(corresponding author)\\
$^{(2)}$Universit\'e Montpellier 2, EPS-I3M, place Eug\`ene Bataillon,\\
 34095 Montpellier cedex 5, France, {\tt jacob@math.univ-montp2.fr}\\
}

\begin{document}
\maketitle
\begin{abstract}
We present a new method for estimating the frontier of a sample.  
The estimator is based on a local polynomial regression on the
power-transformed data. We assume that the exponent of the transformation
goes to infinity while the bandwidth goes to zero.  
We give conditions on these two parameters to obtain
almost complete convergence. The asymptotic conditional bias and variance of
the estimator are provided and its good performance is
illustrated on some finite sample situations.\\

\noindent {\bf Keywords:} local polynomials estimator, power-transform, frontier estimation.  

\noindent {\bf AMS 2000 subject classification:} 62G05, 62G07, 62G20.

\end{abstract}

\section{Introduction}

Let $\left(  X_{i},Y_{i}\right)  $, $i=1,...,n$ be independent and identically
distributed continuous variables and suppose that their common density 
has a support defined by 
$$
S=\left\{  \left(  x,y\right)  \in\mathbb{R} \times\mathbb{R};0\leq y\leq g\left(  x\right)  \right\}.
$$
The unknown
function $g$ is called the \textit{frontier}. We address the problem of
estimating $g$. In~\cite{JMVA2}, we introduced a new kind of estimator
based upon kernel regression on high power-transformed data. More precisely
the estimator of $g(x)$ was defined by
\[
\left(  \left(  p+1\right)  \sum_{i=1}^{n}K_{h}\left(  X_{i}-x\right)
Y_{i}^{p}\left/\sum_{i=1}^{n}K_{h}\left(  X_{i}-x\right) \right. \right)  ^{1/p}%
\]
where $p=p_{n}\rightarrow\infty$ and $h=h_{n}\rightarrow0$ are non
random sequences, $K$ is a symmetrical probability density with support
included in $\left[  -1,1\right]$, and $K_{h}(\bullet)=K\left(  \bullet/h\right)
/h$. Although the correcting term $\left(  p+1\right)  ^{1/p}$ was
specially designed to deal with the case of a uniform conditional distribution
of $Y/X=x$, this estimate has been shown to converge in any case. In the
special but interesting case of a uniform conditional distribution of $Y/X=x$
for a $\alpha-$lipschitzian frontier the minimax rate of convergence is
attained. We also proved that the estimator is asymptotically Gaussian.
It is also interesting to note that, compared to the
extreme value based 
estimators~\cite{ISUPLaurent,Geffroy,Scandi,JSPI,JSPI2,ESAIM}, 
projection estimators~\cite{JacSuq}
or piecewise polynomial estimators~\cite{KorTsy,KorTsy3,Har},
this estimator does not require a partition of the support $S$.

A natural idea suggested by our referees was to investigate the possible
gains obtained by substituting a local polynomial regression to the
Nadaraya-Watson regression. The basic idea in this theory
consists in approximating locally a $C_{k+1}$ regression function by a
polynomial of degree $k$ and taking the zero-degree term as an estimate of the
regression. The regularity of the function brings improvement on the
bias term.
Accordingly, when dealing with high power-transformed data we establish in this paper that the
bias of the local polynomial estimator of degree $k$ is $O_{p}(h(hp)^{k})$
and the variance is $O_{p}\left(  1/nhp\right)  $.

Let us introduce the notations $Z=\left(  p+1\right)  Y^{p}$ and $r_{n}\left(
x\right)  =\E\left(  Z/X=x\right)  $. The conditional distribution of $Y/X=x$
is supposed to be uniform on $\left[  0,g\left(  x\right)  \right]  $, so that
$r_{n}\left(  x\right)  =g^{p}\left(  x\right)  $. For fixed $p$ the method
for estimating $r_{n}\left(  x\right)  $ first consists in solving the
following minimization problem%
\begin{equation}
\underset{ \beta_{0},...,\beta_{k}}%
{\arg\min}\sum_{i=1}^{n}\left(  \left(  p+1\right)  Y_{i}^{p}-\sum_{j=0}%
^{k}\beta_{j}\left(  X_{i}-x\right)  ^{j}\right)  ^{2}K_{h}\left(
X_{i}-x\right)  \text{.} \label{1-0}%
\end{equation}
Then, denoting by 
$\widehat{\mathbf{\beta}}=(  \widehat{\beta}%
_{0},...,\widehat{\beta}_{k})  ^{t}$ the solution of this least square
minimization, one considers $\widehat{\beta}_{0}$ as an estimate of
$r_{n}\left(  x\right)  =\E\left(  Z/X=x\right)  $. The originality and the
difficulty of our paper in contrast with these traditional lines is that here
$p=p_{n}\rightarrow\infty$ and that we consider $\widehat{\beta}_{0}^{1/p}$ as
an estimate of $g\left(  x\right)  .$ So we write $\widehat{g}_{n}\left(
x\right)  =\widehat{\beta}_{0}^{1/p}=\widehat{r}_{n}^{1/p}\left(  x\right)  $.
We refer to~\cite{Hall,Hall3,Keith} for other definitions of local
polynomials estimators (i.e. without high power transform)
and to~\cite{DST,Farrel,Gijbels2,Aragon,Cazals} for the estimation of frontier functions under monotonicity assumptions.  

In order to get simplified matricial expressions, let us denote by $\mathbf{X}$ the
$n\times\left(  k+1\right)  $ matrix defined by the lines
 $[1,X_{i}-x,...,\left(  X_{i}-x\right)  ^{k}] _{i=1,...n}$. The
diagonal matrix of weights $diag\left\{  K_{h}\left(  X_{i}-x\right)
\right\}  $ is denoted by $\mathbf{W}$. We call \textit{design} the
vector$\ \mathcal{X}=(X_{1},...,X_{n})^{t}$ and we denote by $\mathcal{Z}$ the
vector $\left(  Z_{1},...,Z_{n}\right)^{t}$. Then the local regression
problem $\left(  \ref{1-0}\right)  $ can be rewritten as

\[
\widehat{\mathbf{\beta}}=\arg\min_{\beta}\left(  \mathcal{Z}\mathbf{-X\beta
}\right)  ^{t}\mathbf{W}\left(  \mathcal{Z}\mathbf{-X\beta}\right)  ,
\]
where
$\beta=(\beta_0,\dots,\beta_k)^t$.
It is well known from the weighted least square theory that
\[
\widehat{\mathbf{\beta}}=\left(  \mathbf{X}^{t}\mathbf{WX}\right)
^{-1}\mathbf{X}^{t}\mathbf{W}\mathcal{Z}.
\]
In particular, in the case $k=0$ we have
\[
\widehat{\mathbf{\beta}}=\widehat{\beta}_{0}=\sum_{i=1}^{n}Z_{i}K_{h}\left(
X_{i}-x\right) \left/\sum_{i=1}^{n}K_{h}\left(  X_{i}-x\right) \right. ,
\]
so we exactly find back the estimator $\widehat{g}_{n}\left(  x\right)
=\widehat{\beta}_{0}^{1/p}$ studied in~\cite{JMVA2}.
In order to give a general expression of $\widehat{r}_{n}\left(  x\right)$,
we adopt the notations of
Fan and Gijbels whose book~\cite{FanGij} will also serve of
reference for some preliminary results established in Section~\ref{preli}
(see also~\cite{RupWan} for a general multidimensional analysis).
Basing on this, the asymptotic conditional bias and variance
of the estimator are derived in Section~\ref{condi}
when $Y$ given $X=x$ is uniformly distributed.
This result is extended in Section~\ref{comple}, where the almost
complete convergence is proved without this uniformity assumption.
We conclude this paper by an illustration of the behavior
of our estimator on some finite sample situations in Section~\ref{simul}.
Technical lemmas are postponed to the appendix.

\section{Preliminary results}
\label{preli}

Let $x\in {\mathbb{R}}$. From now on, it is assumed that
the density function $f$ of $X_1$ is continuous at $x$ and that $f(x)>0$.
Besides, we suppose that there exists $g_{\min}>0$
such that, for all $t\in{\mathbb{R}}$, $g_{\min}\leq g(t)$.
Let $\mathbf{S}_{n}=\mathbf{X}^{t}\mathbf{WX}$ be the $\left(  k+1\right)
\times\left(  k+1\right)  $ matrix $\left[  S_{n,j+l}\right]  $ $_{0\leq
j,l\leq k}$ defined by 
$$
S_{n,j}=\sum_{i=1}^{n}\left(  X_{i}-x\right)  ^{j} K_{h}\left(  X_{i}-x\right).
$$
 Similarly, denoting by $\mathbf{\Sigma}$ the
$n\times n$ diagonal matrix $diag\left\{  K_{h}^{2}\left(  X_{i}-x\right)
g^{2p}\left(  X_{i}\right)  \right\}  $, $\mathbf{S}_{n}^{\ast}~=\mathbf{X}%
^{t}\mathbf{\Sigma X}$ is the $\left(  k+1\right)  \times\left(  k+1\right)  $
matrix\ $[  S_{n,j+l}^{\ast}]_{0\leq j,l\leq k}$ with
$$
S_{n,j}^{\ast}=\sum_{i=1}^{n}\left(  X_{i}-x\right)  ^{j}K_{h}^{2}\left(
X_{i}-x\right)  g^{2p}\left(  X_{i}\right).
$$
Finally, we introduce the
matrices $\mathbf{S}=\left[  \mu_{j+l}\right]  _{0\leq j,l\leq k}$ and
$\mathbf{S}^{\ast}=$ $\left[  \nu_{j+l}\right]  _{0\leq j,l\leq k}$ with
$\mu_{j}=\int u^{j}K\left(  u\right)  du$ and $\nu_{j}=\int u^{j}K^{2}\left(
u\right)  du$.
Following roughly the same lines as Fan and Gijbels~\cite{FanGij},
we obtain asymptotic expressions for $S_{n,j}$ and $S_{n,j}^{\ast}$. The
first equality~(\ref{1-3}) is a standard result of the
theory and the second one~(\ref{1.4}) boils down to an easy
adaptation. Proofs are thus omitted.

\begin{proposition}
\label{prop1}
If $h\rightarrow0$ and $nh\rightarrow\infty$, then 
\begin{equation}
S_{n,j}=nh^{j}f\left(  x\right)  \mu_{j}\left[  1+o_{p}\left(  1\right)
\right]  . \label{1-3}%
\end{equation}
If, moreover, $ph\rightarrow0$ we have for any $C_{1}$ function $g$
\begin{equation}
S_{n,j}^{\ast}=nh^{j-1}g^{2p}\left(  x\right)  f\left(  x\right)  \nu
_{j}\left[  1+o_{p}\left(  1\right)  \right]  \text{.} \label{1.4}%
\end{equation}
\end{proposition}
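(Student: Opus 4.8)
The plan is to regard both $S_{n,j}$ and $S_{n,j}^{\ast}$ as sums of $n$ i.i.d.\ terms and to prove each asymptotic equivalence by matching the first moment to the announced leading term and then showing that the coefficient of variation tends to zero, so that Chebyshev's inequality upgrades convergence of expectations to the claimed $o_p(1)$ statement.

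For~(\ref{1-3}) I would first write $\E[S_{n,j}]=n\,\E[(X_1-x)^j K_h(X_1-x)]$ and perform the change of variable $u=(t-x)/h$ in the integral against $f$, which yields $\E[S_{n,j}]=nh^j\int u^j K(u)f(x+hu)\,du$. Since $K$ is supported on $[-1,1]$ and $f$ is continuous at $x$, dominated convergence gives $\int u^j K(u)f(x+hu)\,du\to f(x)\mu_j$. For the fluctuation I would bound $\V[S_{n,j}]\leq n\,\E[(X_1-x)^{2j}K_h^2(X_1-x)]$; the same substitution shows this is of order $nh^{2j-1}$, so its ratio to the squared mean is $O(1/(nh))$, which vanishes because $nh\to\infty$. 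Chebyshev's inequality then gives $S_{n,j}=\E[S_{n,j}](1+o_p(1))$, as required.

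The argument for~(\ref{1.4}) follows the same template, but the factor $g^{2p}(X_i)$ is where the real difficulty lies and where the extra hypothesis $ph\to0$ is used. After the change of variable the expectation becomes $nh^{j-1}\int u^j K^2(u)\,g^{2p}(x+hu)f(x+hu)\,du$, so I must control $g^{2p}(x+hu)$ uniformly over the compact range $u\in[-1,1]$. Using the $C_1$ regularity I would write $g(x+hu)=g(x)(1+hu\,g'(x)/g(x)+o(h))$ with the remainder uniform in $u$, and raise this to the power $2p$ to obtain $g^{2p}(x+hu)/g^{2p}(x)=\exp(2ph\,u\,g'(x)/g(x)+o(ph))$. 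This is exactly where $ph\to0$ enters: it drives the exponent to zero, giving $g^{2p}(x+hu)=g^{2p}(x)(1+o(1))$ uniformly on the support of $K$, whence $\E[S_{n,j}^{\ast}]=nh^{j-1}g^{2p}(x)f(x)\nu_j(1+o(1))$.

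The variance of $S_{n,j}^{\ast}$ is then treated as before: bounding it by $n\,\E[(X_1-x)^{2j}K_h^4(X_1-x)g^{4p}(X_1)]$ and applying the same substitution together with the analogous power estimate for $g^{4p}$ shows it is of order $nh^{2j-3}g^{4p}(x)$, whose ratio to the squared mean $(nh^{j-1}g^{2p}(x))^2$ is again $O(1/(nh))\to0$. The $g^{4p}(x)$ factors cancel in this ratio, which explains why the normalization by $g^{2p}(x)$ in the statement is the natural one. A final application of Chebyshev's inequality completes the proof. The only genuinely non-routine step is the uniform control of the high power $g^{2p}(x+hu)$, and it is precisely the interplay between $p\to\infty$ and $ph\to0$ that makes it succeed.
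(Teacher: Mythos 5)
Your overall strategy (compute the mean by the substitution $u=(t-x)/h$, bound the variance, conclude by Chebyshev) is the standard route, and it is essentially what the paper points to when it cites Fan and Gijbels and omits the proof. Your treatment of the only non-classical ingredient is also sound: since $g\geq g_{\min}>0$ and $g$ is $C_1$, one has $g^{2p}(x+hu)=g^{2p}(x)\exp\left(O(ph)\right)=g^{2p}(x)\left(1+o(1)\right)$ uniformly in $|u|\leq 1$, and this is exactly where $ph\rightarrow 0$ is needed.

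There is, however, a genuine gap in your concluding Chebyshev step, and it affects both~(\ref{1-3}) and~(\ref{1.4}). The kernel $K$ is symmetric, so $\mu_j=0$ and $\nu_j=0$ for every \emph{odd} $j$, and odd indices do occur in the matrices $\mathbf{S}_n$ and $\mathbf{S}_n^{\ast}$ (the entries are $S_{n,j+l}$ with $0\leq j,l\leq k$, so all orders up to $2k$ appear as soon as $k\geq1$). For such $j$ your ratio $\V\left[S_{n,j}\right]/\E^2\left[S_{n,j}\right]$ is \emph{not} $O(1/(nh))$: the numerator is of exact order $nh^{2j-1}$, but the denominator is $\left(nh^j\right)^2 o(1)$, because $\E\left[S_{n,j}\right]=nh^j\left(f(x)\mu_j+o(1)\right)=nh^j\,o(1)$; the ratio may diverge, so the conclusion $S_{n,j}=\E\left[S_{n,j}\right]\left(1+o_p(1)\right)$ cannot be reached this way — and in any case it is not the right target, since the proposition read literally with $\mu_j=0$ would assert $S_{n,j}=0$. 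The statement must be understood, as in Fan and Gijbels, in the additive form $S_{n,j}=nh^j\left[f(x)\mu_j+o_p(1)\right]$, and your proof is repaired by normalizing by the deterministic scale rather than by the mean: show $\E\left[S_{n,j}/(nh^j)\right]\rightarrow f(x)\mu_j$ and $\V\left[S_{n,j}/(nh^j)\right]=O(1/(nh))\rightarrow 0$, then apply Chebyshev to $S_{n,j}/(nh^j)-\E\left[S_{n,j}/(nh^j)\right]$. The identical correction, with normalization $nh^{j-1}g^{2p}(x)$, handles $S_{n,j}^{\ast}$. A minor additional point: your variance bound for $S_{n,j}^{\ast}$ requires $\int K^4(u)\,du<\infty$ (e.g.\ $K$ bounded), an assumption beyond the mere existence of the $\nu_j$ that should be made explicit.
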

Let us now quote a general expression of the conditional bias of$\ \widehat{r}_{n}\left(x\right)$. 
From Fan and Gijbels~\cite{FanGij}, 
and denoting by $e_{1}=(1,0,...,0)^{t}$ the first vector of the canonical basis of
$\mathbb{R}^{k+1}$, we have
\begin{eqnarray*}
r_{n}\left(  x\right)  &=&\beta_{0}=e_{1}^{t}\beta\mathbf{~=}~e_{1}%
^{t}\mathbf{S}_{n}^{-1}\mathbf{X}^{t}\mathbf{WX\beta,}\\
\widehat{r}_{n}\left(  x\right)  &=&\widehat{\beta}_{0}=e_{1}^{t}\widehat
{\mathbf{\beta}}=e_{1}^{t}\mathbf{S}_{n}^{-1}\mathbf{X}^{t}\mathbf{W}%
\mathcal{Z},
\end{eqnarray*}
so that
\begin{equation}
\E\left(  \widehat{r}_{n}\left(  x\right)  /\mathcal{X}\right)  -r_{n}\left(
x\right)  =e_{1}^{t}\mathbf{S}_{n}^{-1}\mathbf{X}^{t}\mathbf{W}\left[
\E\left(  \mathcal{Z}/\mathcal{X}\right)  -\mathbf{X\beta}\right]  .
\label{1-1}%
\end{equation}
In Appendix I we give a detailed proof of the following
\begin{proposition}
\label{prop2}
Suppose $g$ is a $C_{k+1}$ function. If $h\rightarrow0$,
$nh\rightarrow\infty$ and $ph\rightarrow0$, then 
$$
\E\left(  \frac{\widehat
{r}_{n}\left(  x\right)  }{r_{n}\left(  x\right)  }-1/\mathcal{X}\right)
=O_{p}\left(  \left(  hp\right)  ^{k+1}\right).
$$
\end{proposition}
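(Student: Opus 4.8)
The plan is to start from the exact bias identity~(\ref{1-1}) and run the classical local-polynomial bias analysis, the one genuinely new feature being that the regression function here is $r_n=g^p$, whose derivatives explode like powers of $p$. First I would pin down the vector $\beta$ entering~(\ref{1-1}): the decomposition holds for \emph{any} $\beta$ because $\mathbf{S}_n^{-1}\mathbf{X}^t\mathbf{W}\mathbf{X}=I$, so I choose the Taylor coefficients $\beta_j=r_n^{(j)}(x)/j!$, which makes $\mathbf{X}\beta$ the degree-$k$ Taylor polynomial of $r_n$ at $x$. Writing $m(u)=\E(Z/X=u)=r_n(u)$, the $i$-th entry of $\E(\mathcal{Z}/\mathcal{X})-\mathbf{X}\beta$ becomes the Lagrange remainder
$$
R_i = r_n(X_i)-\sum_{j=0}^{k}\frac{r_n^{(j)}(x)}{j!}(X_i-x)^j = \frac{r_n^{(k+1)}(\xi_i)}{(k+1)!}(X_i-x)^{k+1},
$$
for some $\xi_i$ between $x$ and $X_i$, so that $\E(\widehat r_n(x)/\mathcal{X})-r_n(x)=e_1^t\mathbf{S}_n^{-1}\mathbf{X}^t\mathbf{W}\mathbf{R}$ with $\mathbf{R}=(R_i)_i$.

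Next I would extract the leading behaviour. The key observation is that $(\mathbf{X}^t\mathbf{W}[(X_i-x)^{k+1}]_i)_j=S_{n,j+k+1}$ for $j=0,\dots,k$, so freezing $r_n^{(k+1)}(\xi_i)$ at its value at $x$ turns the bias into $\frac{r_n^{(k+1)}(x)}{(k+1)!}\,e_1^t\mathbf{S}_n^{-1}(S_{n,k+1},\dots,S_{n,2k+1})^t$. Introducing the scaling matrix $\mathbf{H}=\mathrm{diag}(1,h,\dots,h^k)$, Proposition~\ref{prop1} gives $\mathbf{S}_n=nf(x)\mathbf{H}\mathbf{S}\mathbf{H}[1+o_p(1)]$ and $(S_{n,k+1},\dots,S_{n,2k+1})^t=nf(x)h^{k+1}\mathbf{H}(\mu_{k+1},\dots,\mu_{2k+1})^t[1+o_p(1)]$. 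Since $e_1^t\mathbf{H}^{-1}=e_1^t$, the factors $n$, $f(x)$ and the outer $\mathbf{H}$'s cancel, leaving $e_1^t\mathbf{S}_n^{-1}(S_{n,k+1},\dots,S_{n,2k+1})^t=h^{k+1}\,e_1^t\mathbf{S}^{-1}(\mu_{k+1},\dots,\mu_{2k+1})^t[1+o_p(1)]=O_p(h^{k+1})$, i.e. a constant times $h^{k+1}$.

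Then I would supply the power factor by dividing through by $r_n(x)=g^p(x)$ and evaluating the $p$-dependence of $r_n^{(k+1)}(x)$. Writing $r_n=\exp(p\log g)$ and applying Fa\`a di Bruno, the dominant term in $p$ arises from differentiating the exponent $k+1$ times, giving $r_n^{(k+1)}(x)=g^p(x)\,p^{k+1}(g'(x)/g(x))^{k+1}[1+O(1/p)]$, all remaining terms carrying strictly fewer powers of $p$. Since $g\in C_{k+1}$ and $g\ge g_{\min}>0$, every coefficient here is bounded, so $r_n^{(k+1)}(x)/r_n(x)=O(p^{k+1})$. Multiplying the $O_p(h^{k+1})$ factor from the previous paragraph by this $O(p^{k+1})$ produces the claimed $O_p((hp)^{k+1})$.

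The hard part will be the rigorous control of the remainder, where $r_n^{(k+1)}$ is taken at the unknown $\xi_i$ rather than at $x$: replacing $r_n^{(k+1)}(\xi_i)$ by $r_n^{(k+1)}(x)$ requires taming the ratio $g^p(\xi_i)/g^p(x)$, and this is exactly where the hypothesis $ph\to0$ is indispensable. Because $K_h$ confines the sum to $|X_i-x|\le h$ and $\xi_i$ lies between $x$ and $X_i$, we have $|\xi_i-x|\le h$, whence
$$
\frac{g^p(\xi_i)}{g^p(x)}=\exp\big(p(\log g(\xi_i)-\log g(x))\big)=\exp\big(O(ph)\big)\longrightarrow 1
$$
uniformly over the window, using that $\log g$ is Lipschitz near $x$ (ensured by $g\in C_{k+1}$ and $g\ge g_{\min}$); the same bound absorbs the $O(1/p)$ corrections. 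Once this uniformity is in hand, $R_i/r_n(x)=\frac{p^{k+1}}{(k+1)!}(g'(x)/g(x))^{k+1}(X_i-x)^{k+1}[1+o(1)]$ uniformly in $i$, and substituting into $e_1^t\mathbf{S}_n^{-1}\mathbf{X}^t\mathbf{W}$ reproduces the computation above, so the whole argument turns on this $ph\to0$ control of the power transform.
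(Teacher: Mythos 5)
Your proposal is correct and follows essentially the same route as the paper's Appendix~I: the identity~(\ref{1-1}) with Taylor coefficients $\beta_j=r_n^{(j)}(x)/j!$, the Lagrange remainder, the $\mathbf{H}$-rescaled asymptotics of $\mathbf{S}_n$ and of $c_n=(S_{n,k+1},\dots,S_{n,2k+1})^t$ from Proposition~\ref{prop1}, and the freezing of $r_n^{(k+1)}$ at $x$ with uniform control over the $h$-window under $ph\rightarrow0$ (your $\exp(O(ph))$ Lipschitz bound on $\log g$ is precisely the role of Lemma~\ref{lem10} and Lemma~\ref{lem11}, and your Fa\`a di Bruno expansion plays the role of the paper's recursion~(\ref{4-1})). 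One cosmetic caveat: when $g'(x)=0$ the multiplicative form $r_n^{(k+1)}(x)=g^p(x)\,p^{k+1}\left(g'(x)/g(x)\right)^{k+1}\left[1+O(1/p)\right]$ degenerates, so the expansion should be kept additive with a remainder $O\left(p^{k+1}g^p(x)\right)$ uniform over the window --- which is the bound you actually use in the final step, and is exactly how the paper phrases it.
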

We now examine the conditional variance of
$\widehat{r}_{n}\left(  x\right)  $%
\[
\V\left(  \widehat{r}_{n}\left(  x\right)  /\mathcal{X}\right)  =e_{1}%
^{t}\mathbf{S}_{n}^{-1}\mathbf{X}^{t}\V\mathbf{\left(  \mathbf{W}%
\mathcal{Z}\mathbf{/}\mathcal{X}\right)  XS}_{n}^{-1}e_{1}.
\]
Taking into account of the independence of the pairs $\left(  X_{i},Y_{i}\right)
,\V\mathbf{\left(  \mathbf{W}\mathcal{Z}\mathbf{/}\mathcal{X}\right)  }$ is
the diagonal matrix $diag\left\{  K_{h}^{2}\left(  X_{i}-x\right)
\V(Z_{i}/X_{i}=x)\right\}  $. From the uniformity of the conditional
distribution of the $Y_{i}/X_{i}=x$, it is easily seen that $\V(Z_{i}%
/X_{i}=x)=\frac{p^{2}}{2p+1}g^{2p}\left(  x\right)  $, so that
\[
\V\left(  \widehat{r}_{n}\left(  x\right)  /\mathcal{X}\right)  =\frac{p^{2}%
}{2p+1}e_{1}^{t}\mathbf{S}_{n}^{-1}\mathbf{S}_{n}^{\ast}\mathbf{S}_{n}%
^{-1}e_{1}.
\]
Following the same lines as Fan and Gijbels~\cite{FanGij}, we
obtain the following asymptotic expression
\begin{proposition}
\label{prop3}
Suppose $g$ is a $C_{k+1}$ function. If $h\rightarrow0$,
$nh\rightarrow\infty$ and $ph\rightarrow0$, then 
\[
\V\left(  \frac{\widehat{r}_{n}\left(  x\right)  }{r_{n}\left(  x\right)
}/\mathcal{X}\right)  =\frac{C}{f\left(  x\right)  }\frac{1}{nh}\frac{p^{2}%
}{2p+1}\left[  1+o_{p}\left(  1\right)  \right],
\]
where $C=e_{1}^{t}\mathbf{S}^{^{-1}}\mathbf{S}^{\ast}\mathbf{S}^{^{-1}}e_{1}$.
\end{proposition}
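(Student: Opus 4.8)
The plan is to start from the exact expression
$$\V\left(\frac{\widehat{r}_n(x)}{r_n(x)}/\mathcal{X}\right) = \frac{1}{r_n^2(x)}\V\left(\widehat{r}_n(x)/\mathcal{X}\right) = \frac{1}{g^{2p}(x)}\frac{p^2}{2p+1}\,e_1^t\mathbf{S}_n^{-1}\mathbf{S}_n^*\mathbf{S}_n^{-1}e_1,$$
already derived just above the statement, and to reduce the central quadratic form $e_1^t\mathbf{S}_n^{-1}\mathbf{S}_n^*\mathbf{S}_n^{-1}e_1$ to its limiting value by means of Proposition~\ref{prop1}. The factor $p^2/(2p+1)$ is exact and is simply carried along; all the asymptotic work concerns the matrices.

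First I would introduce the diagonal scaling matrix $H=\mathrm{diag}(1,h,\dots,h^k)$, which separates the $h$-dependence of the moments from their limiting constants. Rewriting Proposition~\ref{prop1} entrywise gives
$$\frac{1}{nf(x)}H^{-1}\mathbf{S}_nH^{-1}=\mathbf{S}+o_p(1),\qquad \frac{h}{ng^{2p}(x)f(x)}H^{-1}\mathbf{S}_n^*H^{-1}=\mathbf{S}^*+o_p(1),$$
where the $o_p(1)$ terms denote matrices whose entries tend to $0$ in probability. Equivalently, writing $A_n=\frac{1}{nf(x)}H^{-1}\mathbf{S}_nH^{-1}$ and $B_n=\frac{h}{ng^{2p}(x)f(x)}H^{-1}\mathbf{S}_n^*H^{-1}$, one has $\mathbf{S}_n=nf(x)\,HA_nH$ and $\mathbf{S}_n^*=\frac{ng^{2p}(x)f(x)}{h}\,HB_nH$ with $A_n\to\mathbf{S}$ and $B_n\to\mathbf{S}^*$.

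Next I would substitute $\mathbf{S}_n^{-1}=\frac{1}{nf(x)}H^{-1}A_n^{-1}H^{-1}$ into the quadratic form. The three scalar prefactors collapse to $g^{2p}(x)/(nhf(x))$, the factors $H^{-1}H$ and $HH^{-1}$ cancel in the middle, and because the first diagonal entry of $H$ is $h^0=1$ we have $H^{-1}e_1=e_1$, whence
$$e_1^t\mathbf{S}_n^{-1}\mathbf{S}_n^*\mathbf{S}_n^{-1}e_1=\frac{g^{2p}(x)}{nhf(x)}\,e_1^tA_n^{-1}B_nA_n^{-1}e_1.$$
Invoking continuity of inversion and multiplication at the invertible matrix $\mathbf{S}$, the quadratic form on the right converges to $C=e_1^t\mathbf{S}^{-1}\mathbf{S}^*\mathbf{S}^{-1}e_1$, so the whole expression equals $\frac{g^{2p}(x)}{nhf(x)}\,C\,[1+o_p(1)]$. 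Multiplying by the prefactor $\frac{1}{g^{2p}(x)}\frac{p^2}{2p+1}$ cancels $g^{2p}(x)$ and yields the stated formula.

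The main obstacle is the passage through the matrix inverse: Proposition~\ref{prop1} only controls the entries of $\mathbf{S}_n$, so one must argue that these entrywise errors propagate benignly to $A_n^{-1}$. This rests on $\mathbf{S}=[\mu_{j+l}]$ being nonsingular---it is a positive-definite moment (Hankel) matrix of the kernel $K$---together with continuity of matrix inversion on the set of invertible matrices, which upgrades $A_n=\mathbf{S}+o_p(1)$ to $A_n^{-1}=\mathbf{S}^{-1}+o_p(1)$. I would also note that the hypothesis $ph\to0$ is precisely what licenses the second expansion~(\ref{1.4}) of Proposition~\ref{prop1}, and hence the asymptotics of $\mathbf{S}_n^*$.
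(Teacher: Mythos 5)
Your proof is correct and is essentially the argument the paper intends: the paper omits the proof of Proposition~\ref{prop3} as a routine adaptation of Fan and Gijbels, and the very same device you use — writing $\mathbf{S}_{n}=nf(x)\mathbf{H}\mathbf{S}\mathbf{H}\left[1+o_{p}(1)\right]$ with $\mathbf{H}=diag(1,h,\dots,h^{k})$, inverting via continuity at the positive-definite moment matrix $\mathbf{S}$, and using $\mathbf{H}^{-1}e_{1}=e_{1}$ — is exactly what the paper does in Appendix~I for Proposition~\ref{prop2}. The only cosmetic addition one might make is to note that $C=e_{1}^{t}\mathbf{S}^{-1}\mathbf{S}^{\ast}\mathbf{S}^{-1}e_{1}>0$ (since $\mathbf{S}^{\ast}$ is also positive definite), which justifies rewriting the convergence in the multiplicative form $C\left[1+o_{p}(1)\right]$.
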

The proof of Proposition~\ref{prop3} is much easier than the one of 
Proposition~\ref{prop2} and it thus omitted.

\section{Conditional bias and variance of $\widehat{g}_{n}\left(  x\right)  $}
\label{condi}

Here we present the main results of this paper and an outline of their
proofs. Many details and ancillary results are postponed to Appendix II.
Proofs are made under the assumption that $g$ is a $C_{k+1}$ function
and the system of conditions below
\[
H:%
\begin{bmatrix}
n\rightarrow\infty,~h\rightarrow0,~p\rightarrow\infty\\
nh\rightarrow\infty,~hp\rightarrow0\\
\left(  p/nh\right)  \log^{2}(nh)\sim\left(  hp\right)  ^{2k+2}%
\end{bmatrix}
.
\]
\begin{theorem}
\label{th4}
Suppose $H$ holds and $g$ is a $C_{k+1}$ function. Then, the asymptotic
conditional bias of the estimate is given by 
$$
\E\left(  \frac{\widehat{g}_{n}\left(  x\right)  }{g\left(  x\right)  }-1/\mathcal{X}\right)
=O_{p}\left(  h\left(  hp\right)  ^{k}\right) .
$$
\end{theorem}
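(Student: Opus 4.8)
The plan is to reduce the bias of $\widehat{g}_n(x)$ to the already-controlled bias of $\widehat{r}_n(x)$ through the defining relation $\widehat{g}_n(x)=\widehat{r}_n^{1/p}(x)$. Writing $T_n = \widehat{r}_n(x)/r_n(x)$ and recalling $g(x)=r_n^{1/p}(x)$, one has the exact identity
\[
\frac{\widehat{g}_n(x)}{g(x)} = T_n^{1/p}.
\]
First I would expand the scalar map $u\mapsto(1+u)^{1/p}$ about $u=0$ with $u=T_n-1$, using
\[
(1+u)^{1/p} = 1 + \frac{1}{p}u + \frac{1-p}{2p^2}u^2 + R_n(u),
\]
so that taking conditional expectations yields
\[
\E\!\left(\frac{\widehat{g}_n(x)}{g(x)}-1/\mathcal{X}\right)
= \frac{1}{p}\,\E\!\left(T_n-1/\mathcal{X}\right)
+ \frac{1-p}{2p^2}\,\E\!\left((T_n-1)^2/\mathcal{X}\right)
+ \E\!\left(R_n/\mathcal{X}\right).
\]
The whole argument then consists in showing that the first term is exactly of order $h(hp)^k$ while the remaining two are of smaller order.

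For the leading term, Proposition~\ref{prop2} gives $\E(T_n-1/\mathcal{X}) = O_p((hp)^{k+1})$, whence $\frac{1}{p}\E(T_n-1/\mathcal{X}) = O_p((hp)^{k+1}/p) = O_p(h(hp)^k)$, which is the announced rate. For the quadratic term I would write $\E((T_n-1)^2/\mathcal{X}) = \V(T_n/\mathcal{X}) + (\E(T_n-1/\mathcal{X}))^2$. Proposition~\ref{prop3} bounds the variance by a term of order $p/(nh)$ (since $p^2/(2p+1)\sim p/2$), and Proposition~\ref{prop2} bounds the squared bias by $O_p((hp)^{2k+2})$. As the prefactor $\frac{1-p}{2p^2}$ is of order $1/p$, the quadratic contribution is $O_p(1/(nh)) + O_p((hp)^{2k+2}/p)$. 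Comparing with $h(hp)^k=(hp)^{k+1}/p$ and invoking the third line of $H$, namely $(p/nh)\log^2(nh)\sim(hp)^{2k+2}$, one checks that $1/(nh)$ is smaller than $h(hp)^k$ by a factor $(hp)^{k+1}/\log^2(nh)\to 0$, and that $(hp)^{2k+2}/p$ is smaller by a factor $(hp)^{k+1}\to 0$. Hence the quadratic term is $o_p(h(hp)^k)$.

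The delicate point, and the one I expect to require the ancillary results of Appendix~II, is the control of $\E(R_n/\mathcal{X})$. The Taylor remainder is a sum of higher-order terms in $(T_n-1)^m$ with $m\geq 3$, and $T_n-1$ is a genuinely random, unbounded quantity, so a pointwise expansion does not suffice; I would instead either bound the higher conditional moments $\E(|T_n-1|^m/\mathcal{X})$ or argue on the event where $|T_n-1|$ is small, the complementary event having negligible probability. The boundedness of the transformed responses given the design, $0\le Z_i\le (p+1)g^p(X_i)$, should make such moment estimates tractable, and each successive power of $(T_n-1)$ is damped by the factor $1/p$ arising from the binomial coefficients, so the whole remainder is dominated by the quadratic term already shown to be negligible. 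One must also dispose of the event $\{\widehat{r}_n(x)\le 0\}$, on which $T_n^{1/p}$ is not defined, by showing it has vanishing probability under $H$; this too is naturally relegated to the appendix.
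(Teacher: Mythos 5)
Your outline follows the same skeleton as the paper's own proof: linearize $u\mapsto(1+u)^{1/p}$ at $u=w_n(x)=T_n-1$, extract the leading term $\frac{1}{p}\E\left(w_n(x)/\mathcal{X}\right)=O_p\left(h(hp)^k\right)$ from Proposition~\ref{prop2}, and show that everything beyond the linear term is negligible; your rate arithmetic for the linear and quadratic contributions under $H$ is correct. The genuine gap is precisely at the step you defer: the remainder off the event where $|T_n-1|$ is small. Since the weights of a local polynomial fit are signed and $w_n(x)$ can be as large as order $p$ (the paper's Lemma~\ref{lem15} only yields $|w_n(x)|/p\leq C(\mathcal{X})$), the claim that ``each successive power of $(T_n-1)$ is damped by the factor $1/p$'' is false globally: the Lagrange remainder carries a factor $(1+\theta(T_n-1))^{1/p-3}$ which is unbounded, so the event-splitting is mandatory and the whole difficulty is concentrated on the bad event $\left\{|w_n(x)|\geq\alpha_n\right\}$.

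Neither of the two tools you propose for that event suffices. With $\alpha_n=(p/nh)^{1/4}$, Chebyshev via Propositions~\ref{prop2} and~\ref{prop3} gives $\PP\left\{|w_n(x)|\geq\alpha_n/\mathcal{X}\right\}=O_p\left(\sqrt{p/nh}\,\log^2(nh)\right)$, whereas the third line of $H$ makes the target rate $h(hp)^k\sim\sqrt{p/nh}\,\log(nh)/p$; the bound misses by a factor of order $p\log(nh)\to\infty$. Fixed-order moments fail too: a Rosenthal-type bound $\E\left(|w_n(x)-\E(w_n(x)/\mathcal{X})|^{m}/\mathcal{X}\right)\leq C_m(p/nh)^{m/2}$ yields a bad-event probability $O_p\left((p/nh)^{m/4}\right)$, and beating the target would require $(hp)^{(k+1)(m-2)/2}\,p\to 0$ up to logarithms, which $H$ does not guarantee for any fixed $m$ (take $hp\sim 1/\log p$). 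What is needed is an exponential inequality conditional on the design, and this is exactly what Appendix~II builds: the bounded variables $\xi_i=\frac{nh}{pg^{p}(x)}a(X_i)\left((p+1)Y_i^{p}-g^{p}(X_i)\right)$ (Lemma~\ref{lem12}), the variance evaluation $\V\left(\sum_{i=1}^n\xi_i/\mathcal{X}\right)\asymp nh/p$ (Lemma~\ref{lem13}), and the Bernstein--Fr\'echet inequality (Lemma~\ref{lem8}) combine in Lemma~\ref{lem14} to give $\PP\left\{|w_n(x)|\geq\alpha_n/\mathcal{X}\right\}=(nh)^{-\infty_p(1)}$, which together with the conditional boundedness of Lemma~\ref{lem15} makes the bad-event contribution $O_p(1/nh)=o_p\left(h(hp)^k\right)$. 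Without an exponential bound of this type your argument does not close, and supplying it is the actual content of the paper's proof.
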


\begin{proof}
Let us write $w_{n}\left(  x\right)  =\widehat{r}_{n}\left(  x\right)
/r_{n}\left(  x\right)  -1$, so that $\widehat{g}_{n}\left(  x\right)
/g\left(  x\right)  =\left(  1+w_{n}\left(  x\right)  \right)  ^{1/p}$ and 
define%
\begin{equation}
\label{defdelta}
\Delta_n=\left\vert \left(  1+w_{n}\left(  x\right)  \right)  ^{1/p}-\left(
1+\frac{w_{n}\left(  x\right)  }{p}\right)  \right\vert .
\end{equation}
Let $\alpha_{n}=\left(  p/nh\right)  ^{1/4}$. For sufficiently large $n$ we
have $\alpha_{n}<1/2$, and thus, Lemma~\ref{lem16} entails
\begin{equation}
\Delta_n\mathbf{1}\left\{  \left\vert w_{n}\left(  x\right)  \right\vert
<\alpha_{n}\right\}  <\Cs\frac{1}{p}w_{n}^{2}\left(  x\right)  \label{2-0},
\end{equation}
which leads to the following bound%
\[
\E\left(  \Delta_n\mathbf{1}\left\{  \left\vert w_{n}\left(  x\right)
\right\vert <\alpha_{n}\right\}  /\mathcal{X}\right)  \leq \Cs\frac{\alpha_{n}%
}{p}\E\left(  \left\vert w_{n}\left(  x\right)  \right\vert /\mathcal{X}%
\right)  \leq \Cs\frac{\alpha_{n}}{p}\E^{1/2}\left(  w_{n}^{2}\left(  x\right)
/\mathcal{X}\right).
\]
Now, from Proposition~\ref{prop2} and Proposition~\ref{prop3},
\begin{align*}
\E\left(  w_{n}^{2}\left(  x\right)  /\mathcal{X}\right)   &  =\V\left(
\frac{\widehat{r}_{n}\left(  x\right)  }{r_{n}\left(  x\right)  }%
/\mathcal{X}\right)  +\E^{2}\left(  \frac{\widehat{r}_{n}\left(  x\right)
}{r_{n}\left(  x\right)  }-1/\mathcal{X}\right)  \\
&  =\frac{\Cs}{f\left(  x\right)  }\frac{1}{nh}\frac{p^{2}}{2p+1}\left[
1+o_{p}\left(  1\right)  \right]  +O_{p}\left(  \left(  hp\right)
^{2k+2}\right)  \text{.}%
\end{align*}
Then, taking into account of $h\left(  hp\right)  ^{k}\sqrt{nhp}=\sqrt{\log
(nh)}\rightarrow\infty$, it follows that 
\begin{align}
\E\left(  \Delta_n\mathbf{1}\left\{  \left\vert w_{n}\left(  x\right)
\right\vert <\alpha_{n}\right\}  /\mathcal{X}\right)   &  \leq \Cs\frac
{\alpha_{n}}{p}\left[  O_{p}\left(  \frac{p}{nh}\right)  +O_{p}\left(  \left(
hp\right)  ^{2k+2}\right)  \right]  ^{1/2}\nonumber\\
&  =O_{p}\left(  \alpha_{n}/\sqrt{nhp}\right)  +O_{p}\left(  \alpha
_{n}h\left(  hp\right)  ^{k}\right) \nonumber\\
& =O_{p}\left(  \alpha_{n}h\left(
hp\right)  ^{k}\right).  \label{2-1}%
\end{align}
Besides, making use of Lemma~\ref{lem15}, we can write%
\[
\E\left(  \Delta_n\mathbf{1}\left\{  \left\vert w_{n}\left(  x\right)
\right\vert \geq\alpha_{n}\right\}  /\mathcal{X}\right)  \leq \Cc\left(
\mathcal{X}\right)  \PP\left\{  \left\vert w_{n}\left(  x\right)  \right\vert
\geq\alpha_{n}/\mathcal{X}\right\},
\]
and, from the triangular inequality,
\begin{align*}
\PP\left\{  \left\vert w_{n}\left(  x\right)  \right\vert \geq\alpha
_{n}/\mathcal{X}\right\}   &  \leq \PP\left\{  2\left\vert w_{n}\left(
x\right)  -\E\left(  w_{n}\left(  x\right)  /\mathcal{X}\right)  \right\vert
\geq\alpha_{n}/\mathcal{X}\right\}  \\
&  +\PP\left\{  2\left\vert \E\left(  w_{n}\left(  x\right)  /\mathcal{X}\right)
\right\vert \geq\alpha_{n}/\mathcal{X}\right\}  .
\end{align*}
Recalling that 
$$
\E\left(  w_{n}\left(  x\right)  /\mathcal{X}\right)  =\E\left(
\frac{\widehat{r}_{n}\left(  \mathcal{X}\right)  }{r_{n}\left(  \mathcal{X}%
\right)  }-1/\mathcal{X}\right)  =O_{p}\left(  \left(  hp\right)
^{k+1}\right),
$$
and noticing that $\left(  hp\right)  ^{k+1}/\alpha
_{n}=\left(  p/nh\right)  ^{1/4}\left(  \log(nh)\right)  ^{1/2}\rightarrow0$,
we conclude that the sequence $\PP\left\{  2\left\vert \E\left(  w_{n}\left(
x\right)  /\mathcal{X}\right)  \right\vert \geq\alpha_{n}/\mathcal{X}\right\}
$ goes to $0$. 
Moreover, remark that  $\PP\left\{  2\left\vert \E\left(  w_{n}\left(  x\right)
/\mathcal{X}\right)  \right\vert \geq\alpha_{n}/\mathcal{X}\right\}  $ is a
$\left\{  0,1\right\}  $-valued random variable. 
This means that for a
sufficient large $n$ depending on $\mathcal{X}$, we merely have 
$$
\PP\left\{
2\left\vert \E\left(  w_{n}\left(  x\right)  /\mathcal{X}\right)  \right\vert
\geq\alpha_{n}/\mathcal{X}\right\}  =0.
$$
Now, from Lemma~\ref{lem14},
\begin{align*}
&  \PP\left(  2\left\vert w_{n}\left(  x\right)  -\E\left(  w_{n}\left(
x\right)  /\mathcal{X}\right)  \right\vert \geq\alpha_{n}/\mathcal{X}\right)
\\
&  =\PP\left(  \left\vert \widehat{r}_{n}\left(  x\right)  -\E\left(  \widehat
{r}_{n}\left(  x\right)  /\mathcal{X}\right)  \right\vert \geq\frac{1}%
{2}\alpha_{n}r_{n}\left(  x\right)  /\mathcal{X}\right)  \\
&  \leq2\exp\left\{  -\Cq\frac{nh}{p}\frac{\alpha_{n}^{2}}{4}\left[
1+o_{p}\left(  1\right)  \right]  \right\}\\
&  =2\exp\left\{  -\frac{\Cq}{4}\sqrt
{nh/p\log^{2}(nh)}\left[  1+o_{p}\left(  1\right)  \right]  \log(nh)\right\}  \\
&  =\left(  nh\right)  ^{-\infty_{p}\left(  1\right)  },
\end{align*}
where $\infty_{p}\left(  1\right)  $ stands for a sequence going almost surely
to the infinity. We thus have at least%
\begin{equation}
\E\left(  \Delta_n\mathbf{1}\left\{  \left\vert w_{n}\left(  x\right)
\right\vert \geq\alpha_{n}\right\}  /\mathcal{X}\right)  =O_{p}\left(
1/nh\right).  \label{2-2}%
\end{equation}
Collecting $\left(  \ref{2-1}\right)$ and $\left(  \ref{2-2}\right)  $ yields
\[
\E\left(  \Delta_n/\mathcal{X}\right)  =O_{p}\left(  \alpha_{n}h\left(
hp\right)  ^{k}\right)  +O_{p}\left(  1/nh\right).
\]
From%
\[
\left\vert \E\left(  \frac{\widehat{g}_{n}\left(  x\right)  }{g\left(
x\right)  }-1/\mathcal{X}\right)  -\frac{1}{p}\E\left(  w_{n}\left(  x\right)
/\mathcal{X}\right)  \right\vert \leq \E\left(  \Delta_n/\mathcal{X}\right)
\]
and Proposition~\ref{prop2},  we obtain
\begin{align}
\E\left(  \frac{\widehat{g}_{n}\left(  x\right)  }{g\left(  x\right)
}-1/\mathcal{X}\right)   &  =\frac{1}{p}\E\left(  w_{n}\left(  x\right)
/\mathcal{X}\right)  +O_{p}\left(  \alpha_{n}h\left(  hp\right)  ^{k}\right)
+O_{p}\left(  1/nh\right)  \nonumber\\
&  =O_{p}\left(  h\left(  hp\right)  ^{k}\right)  +O_{p}\left(  1/nh\right)
.\label{2-3}%
\end{align}
Finally, since $h\left(  hp\right)  ^{k}nh=\left(  nh/p\right)  \left(
hp\right)  ^{k+1}=\sqrt{nh/p}\log(nh)\rightarrow\infty$, expansion~$\left(
\ref{2-3}\right)  $ reduces to%
\[
\E\left(  \frac{\widehat{g}_{n}\left(  x\right)  }{g\left(  x\right)
}-1/\mathcal{X}\right)  =O_{p}\left(  h\left(  hp\right)  ^{k}\right),
\]
and the conclusion follows.
\end{proof}

\begin{theorem}
Suppose $H$ holds and $g$ is a $C_{k+1}$ function. Then, the asymptotic
conditional variance of the estimate is given by 
$$
\V\left(  \frac
{\widehat{g}_{n}\left(  x\right)  }{g\left(  x\right)  }-1/\mathcal{X}%
\right)  =O_{p}\left(  \frac{1}{nhp}\right).
$$
\end{theorem}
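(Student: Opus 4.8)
The strategy is to linearise $\widehat g_n(x)/g(x)=(1+w_n(x))^{1/p}$ about $w_n(x)=0$, exactly as in the proof of Theorem~\ref{th4}, isolating the leading term $w_n(x)/p$ whose conditional variance is supplied by Proposition~\ref{prop3}, and then to check that the quadratic remainder is variance-negligible. Write $\widehat g_n(x)/g(x)-1=w_n(x)/p+R_n$, where $R_n=(1+w_n(x))^{1/p}-1-w_n(x)/p$ satisfies $|R_n|=\Delta_n$ with $\Delta_n$ as in~(\ref{defdelta}). The elementary inequality $\V(U+V)\le 2\V(U)+2\V(V)$, applied conditionally on $\mathcal X$ with $U=w_n(x)/p$ and $V=R_n$, together with $\V(R_n/\mathcal X)\le\E(R_n^2/\mathcal X)=\E(\Delta_n^2/\mathcal X)$, yields
\[
\V\left(\frac{\widehat g_n(x)}{g(x)}-1/\mathcal X\right)\le\frac{2}{p^2}\V\left(\frac{\widehat r_n(x)}{r_n(x)}/\mathcal X\right)+2\,\E\left(\Delta_n^2/\mathcal X\right).
\]
By Proposition~\ref{prop3} the first term equals $\frac{2C}{f(x)}\frac{1}{nh(2p+1)}[1+o_p(1)]=O_p(1/nhp)$, which is already the announced rate; it thus remains only to prove that the remainder enters at strictly smaller order, namely $\E(\Delta_n^2/\mathcal X)=o_p(1/nhp)$.

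To estimate $\E(\Delta_n^2/\mathcal X)$ I reuse the threshold $\alpha_n=(p/nh)^{1/4}$ and the inner/outer split of the proof of Theorem~\ref{th4}. On $\{|w_n(x)|<\alpha_n\}$, squaring the bound of Lemma~\ref{lem16} and using $w_n^2(x)<\alpha_n^2$ there gives $\Delta_n^2\mathbf{1}\{|w_n(x)|<\alpha_n\}<\Cs^2 p^{-2}\alpha_n^2 w_n^2(x)$; taking conditional expectations and invoking the expansion $\E(w_n^2(x)/\mathcal X)=O_p(p/nh)+O_p((hp)^{2k+2})$ already obtained in that proof leads to
\[
\E\left(\Delta_n^2\mathbf{1}\{|w_n(x)|<\alpha_n\}/\mathcal X\right)=O_p\left(\frac{(p/nh)^{1/2}}{nhp}\right)+O_p\left(\frac{(p/nh)^{1/2}(hp)^{2k+2}}{p^2}\right).
\]
The first term is $o_p(1/nhp)$ because $p/nh\to0$. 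Inserting the calibration $(hp)^{2k+2}\sim(p/nh)\log^2(nh)$ of~$H$, the second term equals $(p/nh)^{1/2}\log^2(nh)\cdot(1/nhp)$, which is again $o_p(1/nhp)$ since $H$ forces $(hp)^{k+1}\log(nh)\to0$, equivalently $(p/nh)^{1/2}\log^2(nh)\to0$, a rate fact of the same kind as those used in the proof of Theorem~\ref{th4}. On the complementary event, Lemma~\ref{lem15} bounds $\Delta_n$ by a constant $\Cc(\mathcal X)$, so that $\E(\Delta_n^2\mathbf{1}\{|w_n(x)|\ge\alpha_n\}/\mathcal X)\le\Cc^2(\mathcal X)\,\PP\{|w_n(x)|\ge\alpha_n/\mathcal X\}$, and the exponential concentration of Lemma~\ref{lem14}, exactly as in the derivation of~(\ref{2-2}), makes this probability a $(nh)^{-\infty_p(1)}$, hence negligible against $1/nhp$.

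Collecting the two pieces gives $\E(\Delta_n^2/\mathcal X)=o_p(1/nhp)$, whence $\V(R_n/\mathcal X)=o_p(1/nhp)$ and the displayed inequality collapses to $\V(\widehat g_n(x)/g(x)-1/\mathcal X)=O_p(1/nhp)$, as claimed. The delicate step is the inner-event estimate: one must verify that the squared-bias contribution $\alpha_n^2 p^{-2}(hp)^{2k+2}$ is of strictly smaller order than $1/nhp$, which is not implied by $hp\to0$ alone but rests entirely on the fine tuning of $n$, $h$ and $p$ in the third line of~$H$. By contrast the leading variance term and the outer-event term are routine, the former being a direct application of Proposition~\ref{prop3} and the latter of the exponential inequality of Lemma~\ref{lem14}.
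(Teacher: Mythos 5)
Your proof is correct and is essentially the paper's own argument step for step: the same decomposition (your $R_n$ is the paper's $\delta$, with $\vert R_n\vert=\Delta_n$), the same inequality $\V\left(\widehat g_n(x)/g(x)/\mathcal{X}\right)\le \frac{2}{p^2}\V\left(w_n(x)/\mathcal{X}\right)+2\,\E\left(\Delta_n^2/\mathcal{X}\right)$ with Proposition~\ref{prop3} giving the leading $O_p(1/(nhp))$ term, and the same split at $\alpha_n=(p/nh)^{1/4}$ handled by Lemma~\ref{lem16} on $\{\vert w_n(x)\vert<\alpha_n\}$ and by Lemmas~\ref{lem15} and~\ref{lem14} on the complement. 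The only difference is one of wording in the final rate check: you dispose of the term $\alpha_n^2h^2(hp)^{2k}$ by asserting $(hp)^{k+1}\log(nh)\to 0$, equivalently $(p/nh)^{1/2}\log^2(nh)\to 0$, whereas the paper cites $nh/(p\log^2(nh))\to\infty$; both formulations rest on exactly the same fine-tuning of the third line of $H$ (strictly, both need boundedness of $(p/nh)\log^4(nh)$, a $\log$-factor subtlety that the paper glosses over in precisely the same way you do).
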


\begin{proof}
Introducing%
\[
\delta=\frac{\widehat{g}_{n}\left(  x\right)  }{g\left(  x\right)
}-1-\frac{w_{n}\left(  x\right)  }{p},
\]
we have%
\[
\V\left(  \frac{\widehat{g}_{n}\left(  x\right)  }{g\left(  x\right)
}/\mathcal{X}\right)  \leq\frac{2}{p^{2}}\V\left(  w_{n}\left(  x\right)
/\mathcal{X}\right)  +2\V\left(  \delta/\mathcal{X}\right).
\]
The first term is bounded using Proposition~\ref{prop3}:
\[
\frac{1}{p^{2}}\V\left(  w_{n}\left(  x\right)  /\mathcal{X}\right)
=\frac{1}{p^{2}}\V\left(  \frac{\widehat{r}_{n}\left(  x\right)  }%
{r_{n}\left(  x\right)  }/\mathcal{X}\right)  =O_{p}\left(  \frac{1}%
{nhp}\right).
\]
Second,
\[
\V\left(  \delta/\mathcal{X}\right)  \leq \E\left(  \delta^{2}/\mathcal{X}%
\right)  =\E\left(  \Delta_n^{2}/\mathcal{X}\right),
\]
and (\ref{2-0}) yields, for sufficiently large $n$, 
\[
\Delta_n^{2}\mathbf{1}\left\{  \left\vert w_{n}\left(  x\right)  \right\vert
<\alpha_{n}\right\}  <\Cs^{2}\frac{1}{p^{2}}w_{n}^{4}\left(  x\right)
<\Cs^{2}\frac{\alpha_{n}^{2}}{p^{2}}w_{n}^{2}\left(  x\right),
\]
which entails%
\begin{align*}
\E\left(  \Delta_n^{2}\mathbf{1}\left\{  \left\vert w_{n}\left(  x\right)
\right\vert <\alpha_{n}\right\}  /\mathcal{X}\right)   &  \leq \Cs^{2}%
\frac{\alpha_{n}^{2}}{p^{2}}\E\left(  w_{n}^{2}\left(  x\right)  /\mathcal{X}%
\right) \\
&  =\frac{\alpha_{n}^{2}}{p^{2}}\left[  O_{p}\left(  \frac{p}{nh}\right)
+O_{p}\left(  \left(  hp\right)  ^{2k+2}\right)  \right]  .
\end{align*}
In a similar way as in the previous proof, one has
\begin{align*}
\E\left(  \Delta_n^{2}\mathbf{1}\left\{  \left\vert w_{n}\left(  x\right)
\right\vert \geq\alpha_{n}\right\}  /\mathcal{X}\right)   &  \leq \Cc\left(
{\mathcal{X}}\right)  \PP\left\{  \left\vert w_{n}\left(  x\right)  \right\vert \geq
\alpha_{n}/\mathcal{X}\right\} \\
&  =\left(  nh\right)  ^{-\infty_{p}\left(  1\right)  }=O_{p}\left(  \frac
{1}{n^{2}h^{2}}\right).
\end{align*}
It follows that
\[
\E\left(  \Delta_n^{2}/\mathcal{X}\right)  =O_{p}\left(  \frac{\alpha_{n}^{2}%
}{nhp}\right)  +O_{p}\left(  \alpha_{n}^{2}h^{2}\left(  hp\right)
^{2k}\right)  +O_{p}\left(  \frac{1}{n^{2}h^{2}}\right),
\]
and, taking account of $\alpha_{n}=\left(  p/nh\right)  ^{1/4}$ and
$nh/p\log^{2}(nh)\rightarrow\infty$,
we finally obtain
\begin{align*}
\V\left(  \frac{\widehat{g}_{n}\left(  x\right)  }{g\left(  x\right)
}/\mathcal{X}\right)   &  =O_{p}\left(  \frac{1}{nhp}\right)  +O_{p}\left(
\alpha_{n}^{2}h^{2}\left(  hp\right)  ^{2k}\right)  +O_{p}\left(  \frac
{1}{n^{2}h^{2}}\right) \\
&  =O_{p}\left(  \frac{1}{nhp}\right),
\end{align*}
and the result is proved.
\end{proof}

\begin{remark} Under the assumptions of the above theorems, 
the conditional mean square error is given by
\begin{align*}
\E\left[\left( \frac{\widehat{g}_{n}\left(  x\right)
}{g\left(  x\right)  }-1\right)^2/\mathcal{X} \right]
&=  \V\left(  \frac{\widehat{g}_{n}\left(  x\right)  }{g\left(  x\right)
}-1/\mathcal{X}\right)  +\E^{2}\left(  \frac{\widehat{g}_{n}\left(  x\right)
}{g\left(  x\right)  }-1/\mathcal{X}\right) \\
&  =O_{p}\left(  \frac{1}{nhp}\right)  +O_{p}\left(  h^{2}\left(  hp\right)
^{2k}\right) \\
&  =O_{p}\left(  h^{2}\left(  hp\right)  ^{2k}\right)  =O_{p}\left(  \frac
{1}{nhp}\log^{2}(nh)\right).
\end{align*}
\end{remark}
Under condition H, the ratio between the bias and variance terms
is asymptotically equivalent to $\log^2(nh)$. Thus, bias and variance of 
$\widehat{g}_{n}\left(  x\right)$ are approximatively of same order, up to this
logarithmic factor.

\section{Convergence of $\widehat{g}_{n}\left(  x\right)  $ under
general conditions}
\label{comple}

In this section, the almost complete convergence of  $\widehat{g}_{n}\left(  x\right)  $
is established without any assumption on the conditional distribution of $Y$ given $X$. 

\begin{theorem}
If  $h\rightarrow0$, $p\rightarrow\infty$, and $nh/\log n\rightarrow\infty$,
then $\widehat{g}_{n}\left(  x\right)  $ converges to $g\left(  x\right)  $
almost completely. 
\end{theorem}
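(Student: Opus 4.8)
The plan is to work with the linear-smoother form $\widehat r_n(x)=\sum_{i=1}^n\ell_{ni}(x)Z_i$, where $\ell_{ni}(x)=e_1^t\mathbf S_n^{-1}(1,X_i-x,\dots,(X_i-x)^k)^tK_h(X_i-x)$, and to establish the two one-sided statements $\limsup_n\widehat g_n(x)\le g(x)$ and $\liminf_n\widehat g_n(x)\ge g(x)$ almost completely, which together give the claim since $\widehat g_n(x)=\widehat r_n(x)^{1/p}$. Propositions~\ref{prop2} and~\ref{prop3} are unavailable here, as they presuppose the uniform conditional law and $ph\to0$, so the argument must be of a more robust nature. The only structural input I need is Proposition~\ref{prop1}, whose hypotheses $h\to0$ and $nh\to\infty$ are implied by $nh/\log n\to\infty$: through~(\ref{1-3}) it yields $\ell_{ni}(x)\approx(nhf(x))^{-1}W^\ast((X_i-x)/h)$ for the fixed equivalent kernel $W^\ast(u)=[e_1^t\mathbf S^{-1}(1,u,\dots,u^k)^t]K(u)$, gives $\sum_i\ell_{ni}(x)=1$, and makes the Lebesgue constant $L_n=\sum_i|\ell_{ni}(x)|$ converge, whence $L_n^{1/p}\to1$ almost completely. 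The engine for almost complete convergence is then an exponential (Bernstein-type) inequality for the independent, bounded, conditionally centred summands $\ell_{ni}(x)(Z_i-r_n(X_i))$, together with the Borel--Cantelli lemma; the condition $nh/\log n\to\infty$ is exactly what renders the resulting tail probabilities summable.

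The upper bound is the easy half and needs no cancellation. Whenever $K_h(X_i-x)\ne0$ one has $|X_i-x|\le h$ and hence $Z_i\le(p+1)\bar g_h^p$, with $\bar g_h:=\sup_{|t-x|\le h}g(t)$, so that $\widehat r_n(x)\le\sum_i|\ell_{ni}(x)|Z_i\le L_n(p+1)\bar g_h^p$ and therefore $\widehat g_n(x)\le(L_n(p+1))^{1/p}\bar g_h$. Since $L_n^{1/p}\to1$, $(p+1)^{1/p}\to1$ and $\bar g_h\to g(x)$ by continuity of $g$ at $x$, we obtain $\limsup_n\widehat g_n(x)\le g(x)$ almost completely. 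The decisive point is that the exponential factor $\bar g_h^p$ is removed by the outer $1/p$ power rather than expanded, which is why no hypothesis linking $p$ and $h$ is required.

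The lower bound is the heart of the matter. I would first record the deterministic fact that replaces the uniformity assumption: $r_n^{1/p}(x)=(p+1)^{1/p}\E(Y^p/X=x)^{1/p}\to\operatorname{ess\,sup}(Y/X=x)=g(x)$, since $\E(Y^p/X=x)^{1/p}$ is the $L^p$-norm of $Y$ under the conditional law and increases to the right endpoint $g(x)$ of its support $[0,g(x)]$. For $k=0$ the estimator is then handled directly, because the weights $\ell_{ni}(x)=K_h(X_i-x)/\sum_jK_h(X_j-x)$ are nonnegative: fixing $\epsilon>0$ and $H=\{i:|X_i-x|\le h,\ Y_i>g(x)-\epsilon\}$, one has $\widehat r_n(x)\ge(p+1)(g(x)-\epsilon)^p\sum_{i\in H}\ell_{ni}(x)$, where $\sum_{i\in H}\ell_{ni}(x)$ is a ratio of two window sums whose conditional means differ by the factor $\PP(Y>g(x)-\epsilon/X\approx x)>0$, positive precisely because $g(x)$ lies in the support of $Y/X=x$. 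The exponential inequality forces this ratio above a fixed positive constant almost completely, so $\widehat g_n(x)\ge(\mathrm{const}\cdot(p+1))^{1/p}(g(x)-\epsilon)\to g(x)-\epsilon$; letting $\epsilon\downarrow0$ yields $\liminf_n\widehat g_n(x)\ge g(x)$. This recovers the result of~\cite{JMVA2} under the present minimal conditions.

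The main obstacle is to push the lower bound through for $k\ge1$, where $W^\ast$ takes negative values and monotonicity of $\widehat r_n(x)$ in the $Z_i$ is lost. Two related difficulties arise. First, $\widehat r_n(x)$ need no longer be positive, so $\widehat g_n(x)=\widehat r_n(x)^{1/p}$ must be read on the event $\{\widehat r_n(x)>0\}$, whose overwhelming probability has itself to be secured. Second, and more seriously, when $ph\not\to0$ the oscillation factor $(\bar g_h/g(x))^p$ does \emph{not} tend to $1$, so the largest summands $Z_i\approx(p+1)\bar g_h^p$ can exponentially dominate the nominal scale $(p+1)g(x)^p$; a crude bound of the negative-weight contribution by $L_n(p+1)\bar g_h^p$ is then hopeless, as it swamps the positive contribution $\asymp(p+1)(g(x)-\epsilon)^p$ one wishes to retain. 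Resolving this seems to demand either proving that the conditional mean $\E(\widehat r_n(x)/\mathcal X)=\sum_i\ell_{ni}(x)r_n(X_i)$ genuinely keeps the order $g(x)^p$ despite the signed cancellations—so that the Bernstein transfer, in its linear large-deviation regime where a constant-order relative deviation costs $\exp(-\Cq\,nh)$, carries $\E(\widehat r_n(x)/\mathcal X)^{1/p}\to g(x)$ over to $\widehat r_n(x)^{1/p}$—or else exploiting that any extreme point carrying positive weight already drives $\widehat g_n(x)$ towards $\bar g_h\to g(x)$. I expect the control of these signed-weight cancellations, uniformly over the window and without a hypothesis forcing $ph\to0$, to be the one genuinely delicate step; everything else is bookkeeping built on Proposition~\ref{prop1} and the exponential inequality.
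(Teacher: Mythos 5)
Your proposal is incomplete at exactly the point you flag yourself: the lower bound $\liminf_n \widehat g_n(x)\ge g(x)$ for degrees $k\ge 2$. What you do write is sound: the upper bound via the Lebesgue constant works (granted an almost-complete version of Proposition~\ref{prop1}, which indeed follows from Lemma~\ref{lem8} and Borel--Cantelli under $nh/\log n\to\infty$), the observation that $r_n^{1/p}(x)\to g(x)$ is the correct replacement for the uniformity assumption, and the nonnegative-weight lower bound is correct. Note in passing that for $k=1$ with a symmetric kernel the first row of $\mathbf{S}^{-1}$ is $(1,0)$, so $W^{\ast}=K\ge 0$ and your ``$k=0$'' argument already covers $k=1$; the signed-weight obstruction only begins at $k=2$. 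But for $k\ge 2$ neither of the two strategies you sketch is carried out, and the difficulty is real: when $ph\not\to 0$, the ratios $r_n(X_i)/r_n(x)$ can be of size $e^{\pm cph}$ across the window, and (for monotone $g$ and $k=2$, say) the negative part of $W^{\ast}$ occupies the outer region $|u|>\sqrt{\mu_4/\mu_2}$ of the window, exactly where these ratios are extremal, so nothing in your argument prevents the signed conditional mean $\sum_i \ell_{ni}(x)r_n(X_i)$ from being dominated by the negative-weight contribution, and even the sign of $\widehat r_n(x)$ is then in doubt. A proof of the theorem as stated must dispose of this, and your proposal does not: so there is a genuine gap.

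You should know, however, that your diagnosis lands precisely on the step that the paper's own proof does not spell out. The paper writes $\widehat r_n(x)=\widehat s_n(x)[1+o_p(1)]$ via the equivalent kernel (Lemma~\ref{lem9}, your $\ell_{ni}$), reduces $\{T_n(x)>1+\eta\}$ to the event that a signed-kernel-weighted sum of indicators exceeds $2$, invokes the proof of Theorem~1 of~\cite{JMVA2} ``with the positive kernel replaced by the signed kernel,'' and dismisses the lower tail with ``easily treated in a similar way.'' Both reductions rest on inequalities that are valid term by term only for nonnegative weights: the displayed bound $T_n(x)\le(1+\delta)\bigl[\sum_i a(X_i)\mathbf{1}\{Y_i<g(x)(1+\delta)\}\bigr]^{1/p}$ reverses on the terms with $a(X_i)<0$ (the correct statement uses $\sum_i a^{+}(X_i)$, i.e.\ your Lebesgue constant), and the lower-tail argument of~\cite{JMVA2} discards the observations below $g(x)(1-\delta)$ and bounds the remaining terms below by $(1-\delta)^p a(X_i)$ --- operations that again require $a(X_i)\ge 0$. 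So the honest comparison is: you proved what the paper proves explicitly (the upper tail, and the lower tail whenever $W^{\ast}\ge 0$, hence $k\le 1$), and you isolated, but did not resolve, the signed-cancellation problem that the paper buries in ``similar way.'' The gap in your proposal is genuine, but it is at least equally present in the paper's own argument.
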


\begin{proof}
Introducing 
\[
a\left(  X_{i}\right)  =\frac{1}{nf\left(  x\right)  }K_{h}\left(
X_{i}-x\right)  \sum_{j=0}^{k}u_{j}\left(  \frac{X_{i}-x}{h}\right)  ^{j}%
\]
and$\ \widehat{s}_{n}\left(  x\right)  =\sum\nolimits_{i=1}^{n}a\left(
X_{i}\right)  Z_{i}$, Lemma~\ref{lem9} entails that $\widehat{r}_{n}$
can be rewritten as
\[
\widehat{r}_{n}\left(  x\right)  =\widehat{s}_{n}\left(  x\right)
+\widehat{s}_{n}\left(  x\right)  o_{p}\left(  1\right).
\]
Thus, with $2\eta=\varepsilon/g\left(  x\right)  $ and since
$[1+o_{p}(1)]^{1/p}=[  1+o_{p}(1)]  $, we have
\begin{align*}
\left\{  \left\vert \widehat{g}_{n}\left(  x\right)  -g\left(  x\right)
\right\vert >\varepsilon\right\}   &  =\left\{  \left\vert \frac{\widehat
{r}_{n}^{1/p}\left(  x\right)  }{g\left(  x\right)  }-1\right\vert
>2\eta\right\} \\
&  \subseteq\left\{  \left\vert \frac{\widehat{s}_{n}^{1/p}\left(  x\right)  }{g\left(
x\right)  }-1\right\vert >\eta\right\}  \cup\left\{  \left\vert \frac
{\widehat{s}_{n}^{1/p}\left(  x\right)  }{g\left(  x\right)  }o_{p}\left(
1\right)  \right\vert >\eta\right\},
\end{align*}
with
\[
\frac{\widehat{s}_{n}^{1/p}\left(  x\right)  }{g\left(  x\right)  }=\left[
\sum_{i=1}^{n}a\left(  X_{i}\right)  \left(  p+1\right)  \left[
\frac{Y_{i}}{g\left(  x\right)  }\right]  ^{p}\right]  ^{1/p}.
\]
Since $\left(  1+p\right)  ^{1/p}\rightarrow1$, let us focus on
\[
T_{n}\left(  x\right)  =\left[  \sum_{i=1}^{n}a\left(  X_{i}\right)
\left[  \frac{Y_{i}}{g\left(  x\right)  }\right]  ^{p}\right]  ^{1/p}.
\]
Taking $0<\delta<\eta$, $\left\vert X_{i}-x\right\vert <h$ implies
$Y_{i}-g\left(  x\right)  \left(  1+\delta\right)  <0$ and thus
\begin{align*}
T_{n}\left(  x\right)   &  =\left[  \sum_{i=1}^{n}a\left(
X_{i}\right)  \left[  \frac{Y_{i}}{g\left(  x\right)  }\right]  ^{p}%
\mathbf{1}\left\{  Y_{i}<g\left(  x\right)  \left(  1+\delta\right)  \right\}
\right]  ^{1/p}\\
&  \leq\left(  1+\delta\right)  \left[  \sum_{i=1}^{n}a\left(
X_{i}\right)  \mathbf{1}\left\{  Y_{i}<g\left(  x\right)  \left(
1+\delta\right)  \right\}  \right]  ^{1/p}.
\end{align*}
Moreover, since, for $n$ large enough, $\left(  \frac{1+\eta}{1+\delta
}\right)  ^{p}>2$, it follows that
\begin{align*}
&\left\{  T_{n}\left(  x\right)  >1+\eta\right\}\\
   &  \subseteq\left\{
\sum_{i=1}^{n}a\left(  X_{i}\right)  \mathbf{1}\left\{
Y_{i}<g\left(  x\right)  \left(  1+\delta\right)  \right\}  >2\right\} \\
&  =\left\{  \frac{1}{n}\sum_{i=1}^{n}K_{h}\left(  X_{i}-x\right)
\sum_{j=0}^{k}u_{j}\left(  \frac{X_{i}-x}{h}\right)  ^{j}\mathbf{1}\left\{
Y_{i}<g\left(  x\right)  \left(  1+\delta\right)  \right\}  \frac{1}{f\left(
x\right)  }>2\right\}.
\end{align*}
Now, the only difference with the proof of Theorem~1 in~\cite{JMVA2} is
that the positive kernel $K\left(  x\right)  $ is replaced by the signed
kernel of higher order $K\left(  x\right)  \sum_{j=0}^{k}u_{j}x^{j}$. 
The case
$\left\{  T_{n}\left(  x\right)  <1-\eta\right\}  $ is easily treated in a
similar way.
\end{proof}

\section{Numerical experiments}
\label{simul}

Here, the following model is simulated:
$X$ is uniformly distributed on $[0,1]$ and $Y$ given $X=x$
is distributed on $[0,g(x)]$ such that
\begin{equation}
\label{proba}
\PP(Y>y|X=x)=\left(1-\frac{y}{g(x)}\right)^\gamma,
\end{equation}
with $\gamma>0$. This conditional survival distribution function
belongs to the Weibull domain of attraction, with extreme value
index $-\gamma$, see~\cite{EMBR} for a review on this topic.
In the following, three exponents are used $\gamma\in\{1,2,3\}$.
The case $\gamma=1$ corresponds to the situation where $Y$ given $X=x$
is uniformly distributed on $[0,g(x)]$. The larger $\gamma$ is,
the smaller the probability~(\ref{proba}) is, when $y$ is close to
the frontier $g(x)$. 
The frontier function is given by
$$
g(x)=(1/10+\sin(\pi x))\left(11/10-\exp\left(-64(x-1/2)^2\right)/2\right).
$$
The following kernel is chosen 
$$
K(t)=\cos^2(\pi t /2)\mathbf{1}\{t\in[-1,1]\},
$$
and we limit ourselves to first order local polynomials, i.e. $k=1$.
In this case, to fulfill assumption H, one can choose 
$h= c_h n^{-1/2} (\log n)^{1+3\tau/5}$ and $p= c_p n^{1/2} (\log n)^{-1-\tau}$
where $\tau$, $c_h$ and $c_p$ are positive constants. 
In practice, since the choice of $c_h$ and $c_p$ is more important than
the logarithmic factors, we use
$h=4\hat\sigma(X) n^{-1/2}$ and $p=n^{1/2}$.
The multiplicative constants are chosen heuristically.
The dependence with respect to the standard-deviation of $X$
is inspired from the density estimation case. The scale factor 4 was
chosen on the basis of intensive simulations, similarly to~\cite{JMVA2}. 

\noindent The experiment involves four steps:
\begin{itemize}
\item First, $m=500$ replications of a $500-$ sample are simulated.  
\item For each of the $m$ previous set of points, the 
frontier estimator $\hat g_n$ is computed for $k=1$.  
\item The $m$ associated $L_1$ distances to $g$ are evaluated on a grid.  
\item The smallest and largest $L_1$ errors are recorded.
\end{itemize}
\noindent Results are depicted on Figure~\ref{fctludo1}--\ref{fctludo3},
where the best situation
(i.e. the estimation corresponding to the smallest $L_1$ error)
and the worst situation
(i.e. the estimation corresponding to the largest $L_1$ error)
are represented.
Worst situations are obtained when 
no points were simulated at the upper boundary of the support.
To overcome this problem, the normalizing constant $(p+1)$ 
in~(\ref{1-0}) could be modified as in~\cite{JMVA2}, Section~6 to
deal with some particular parametric models of $Y$ given $X=x$.

\section*{Appendix I: Conditional bias of $\widehat{r}_{n}\left(  x\right)  $}

In this appendix, we provide a proof of Proposition~\ref{prop2}.
From $\left(  \ref{1-1}\right)  $, we have
\[
\E\left(  \frac{\widehat{r}_{n}\left(  x\right)  }{r_{n}\left(  x\right)
}-1/\mathcal{X}\right)  =g^{-p}\left(  x\right)  e_{1}^{t}\mathbf{S}_{n}%
^{-1}\mathbf{X}^{t}\mathbf{W}\left[  \E\left(  \mathcal{Z}/\mathcal{X}\right)
-\mathbf{X\beta}\right],
\]
where the term $\E\left(  \mathcal{Z}/\mathcal{X}\right)  -\mathbf{X\beta}$ can
be rewritten as
\[
\left(r_{n}\left(  X_{1}\right)  -\sum_{j=0}^{k}\beta_{j}\left(  X_{1}-x\right)
^{j},...,~r_{n}\left(  X_{n}\right)  -\sum_{j=0}^{k}\beta_{j}\left(
X_{n}-x\right)  ^{j}\right).
\]
Taylor-Lagrange formula with $\beta_{j}=\frac{1}{j!}\frac
{\partial^{j}g^{p}}{\partial x^{j}}\left(  x\right)  $ and $0<\theta<1$ yields
\[
g^{p}\left(  u\right)  =\sum_{j=0}^{k}\beta_{j}\left(  u-x\right)
^{j}+\left(  u-x\right)  ^{k+1}\frac{1}{k+1!}\frac{\partial^{k+1}g^{p}%
}{\partial x^{k+1}}\left(  x+\theta\left(  u-x\right)  \right)  ,
\]
so that, we can derive, for $0<\theta_{i}<1$ depending on $X_{i}$,
the following expansion
\[
r_{n}\left(  X_{i}\right)  -\sum_{j=0}^{k}\beta_{j}\left(  X_{i}-x\right)
^{j}=\left(  X_{i}-x\right)  ^{k+1}\frac{1}{k+1!}\frac{\partial^{k+1}g^{p}%
}{\partial x^{k+1}}\left(  x+\theta_{i}\left(  X_{i}-x\right)  \right)  .
\]
Since $K$ has a bounded support, we have $K_{h}\left(
X_{i}-x\right)  =0$ for $\left\vert X_{i}-x\right\vert >h$. If $\left\vert
X_{i}-x\right\vert \leq h$ and $0<\theta_{i}<1$,
under the conditions $h\rightarrow0$ and $ph\rightarrow
0$, Lemma~\ref{lem11} yields
\begin{align*}
&  \left(  X_{i}-x\right)  ^{j}K_{h}\left(  X_{i}-x\right)  \frac
{\partial^{k+1}g^{p}}{\partial x^{k+1}}\left(  x+\theta_{i}\left(
X_{i}-x\right)  \right)  \\
&  =\left(  X_{i}-x\right)  ^{j}K_{h}\left(  X_{i}-x\right)  \left[
\frac{\partial^{k+1}g^{p}}{\partial x^{k+1}}\left(  x\right)  +\sum
_{j=1}^{k+1}p^{j}g^{p-j}\left(  x\right)  o\left(1\right)  \right].
\end{align*}
Thus, recalling that $S_{n,j}=\sum_{i=1}^{n}\left(  X_{i}-x\right)  ^{j}%
K_{h}\left(  X_{i}-x\right)  $ and $\beta_{j}=\frac{1}{j!}\frac{\partial
^{j}g^{p}}{\partial x^{j}}\left(  x\right)  $,   the $(k+1)$%
-dimensional vector $\mathbf{X}^{t}\mathbf{W}\left(  \E\left(
\mathcal{Z}/\mathcal{X}\right)  -\mathbf{X\beta}\right)  $ can be
rewritten as
\begin{align*}
&  \left[  \sum_{i=1}^{n}\frac{\left(  X_{i}-x\right)  ^{k+j}}{k+1!}%
K_{h}\left(  X_{i}-x\right)  \frac{\partial^{k+1}g^{p}}{\partial x^{k+1}%
}\left(  x+\theta_{i}\left(  X_{i}-x\right)  \right)  \right]  _{j=1,...,k+1}%
\\
&  =\left[  \beta_{k+1}S_{n,k+j}+\frac{1}{k+1!}S_{n,k+j+1}\sum_{j=1}%
^{k+1}p^{j}g^{p-j}\left(  x\right)  o\left(1\right)  \right]
_{j=1,...,k+1}.
\end{align*}
Introducing the vector $c_{n}=\left(  S_{n,k+1},...,S_{n,2k+1}%
\right)  ^{t}$, we obtain%
\[
\mathbf{X}^{t}\mathbf{W}\left(  \E\left(  \mathcal{Z}/\mathcal{X}\right)
-\mathbf{X\beta}\right)  =\beta_{k+1}c_{n}+\frac{1}{k+1!}c_{n}\sum_{j=1}%
^{k+1}p^{j}g^{p-j}\left(  x\right)  o\left(1\right),
\]
and, returning to the bias of $\widehat{r}_{n}\left(  x\right)$,
\begin{equation}
\label{bias}
\E\left(  \frac{\widehat{r}_{n}\left(  x\right)  }{r_{n}\left(  x\right)
}-1/\mathcal{X}\right)  =g^{-p}\left(  x\right)  \beta_{k+1}e_{1}%
^{t}\mathbf{S}_{n}^{-1}c_{n}+\frac{1}{k+1!}e_{1}^{t}\mathbf{S}_{n}^{-1}%
c_{n}\sum_{j=1}^{k+1}p^{j}g^{-j}\left(  x\right)  o\left(1\right).
\end{equation}
Recalling that $\mathbf{S}_{n}=nf\left(  x\right)  \mathbf{H}\mathbf{SH}%
\left[  1+o_{p}\left(  1\right)  \right]  $ with $ \mathbf{H}=diag(1,h,\dots,h^k)$, we have
\[
\mathbf{S}_{n}^{-1}=\frac{1}{nf\left(  x\right)  }\mathbf{H}^{-1}%
\mathbf{S}^{-1}\mathbf{H}^{-1}\left[  1+o_{p}\left(  1\right)  \right].
\]
Besides, introducing the vector $c=\left(  \mu_{k+1},...,\mu_{2k+1}\right)$,
the asymptotic expression of $S_{n,j}$ established in Proposition~\ref{prop1} entails
\[
c_{n}=nh^{k+1}f\left(  x\right)  \mathbf{H}c\left[  1+o_{p}\left(  1\right)
\right]  .
\]
Let us first focus on the first term of the bias expansion~(\ref{bias}):
\begin{align*}
g^{-p}\left(  x\right)  e_{1}^{t}\mathbf{S}_{n}^{-1}\beta_{k+1}c_{n}  &
=g^{-p}\left(  x\right)  \frac{1}{nf\left(  x\right)  }\beta_{k+1}%
nh^{k+1}f\left(  x\right)  e_{1}^{t}\mathbf{H}^{-1}\mathbf{S}^{-1}%
\mathbf{H}^{-1}\mathbf{H}c\left[  1+o_{p}\left(  1\right)  \right] \\
&  =g^{-p}\left(  x\right)  h^{k+1}\beta_{k+1}e_{1}^{t}\mathbf{S}^{-1}c\left[
1+o_{p}\left(  1\right)  \right],
\end{align*}
and using the expression of $\frac{\partial^{k}g^{p}}{\partial x^{k}}$ in
$\left(  \ref{4-1}\right)$, we have
\[
g^{-p}\left(  x\right)  \beta_{k+1}=\sum_{j=1}^{k+1}\frac{p!}{\left(
p-j\right)  !}g^{-j}\left(  x\right)  \phi_{\substack{j\\}}\left(  x\right)
=O\left(  p^{k+1}\right)  ,
\]
leading to
\begin{equation}
\label{bias1}
g^{-p}\left(  x\right)  e_{1}^{t}\mathbf{S}_{n}^{-1}\beta_{k+1}c_{n}=e_{1}%
^{t}\mathbf{S}^{-1}cO_{p}\left(  \left(  hp\right)  ^{k+1}\right)
=O_{p}\left(  \left(  hp\right)  ^{k+1}\right).
\end{equation}
Let us now consider the second term in~(\ref{bias}):
\begin{align*}
\frac{1}{k+1!}e_{1}^{t}\mathbf{S}_{n}^{-1}c_{n}\sum_{j=1}^{k+1}p^{j}%
g^{-j}\left(  x\right)  o\left(1\right) 
&  =e_{1}^{t}\mathbf{S}_{n}^{-1}c_{n}p^{k+1}o\left(1\right) \\
&  =\frac{1}{nf\left(  x\right)  }e_{1}^{t}\mathbf{S}^{-1}\mathbf{H}^{-1}%
c_{n}p^{k+1}o_{p}\left(1\right)
\end{align*}
Expanding $\mathbf{H}^{-1}c_{n}$ we have
\[
\mathbf{H}^{-1}c_{n}=\mathbf{H}^{-1}nh^{k+1}f\left(  x\right)  \mathbf{H}%
c\left[  1+o_{p}\left(  1\right)  \right]  =nh^{k+1}f\left(  x\right)
c\left[  1+o_{p}\left(  1\right)  \right],
\]
which entails
\begin{equation}
\label{bias2}
\frac{1}{nf\left(  x\right)  }e_{1}^{t}\mathbf{S}^{-1}\mathbf{H}^{-1}%
c_{n}p^{k+1}o_{p}\left(1\right)  =e_{1}^{t}\mathbf{S}^{-1}co_{p}\left(
\left(  hp\right)  ^{k+1}\right)  =o_{p}\left(  \left(  hp\right)
^{k+1}\right).
\end{equation}
Collecting (\ref{bias1}) and (\ref{bias2}), 
we obtain the announced result
\[
\E\left(  \frac{\widehat{r}_{n}\left(  x\right)  }{r_{n}\left(  x\right)
}-1/\mathcal{X}\right)  =O_{p}\left(  \left(  hp\right)  ^{k+1}\right).
\]

\section*{Appendix II: Auxiliary results}

We first quote a Bernstein-Fr\'{e}chet inequality adapted to our framework.
\begin{lemma}
\label{lem8}
Let $X_{1},...,X_{n}$ independent centered random variables such that for each
positive integers $i$ and $k$, and for some positive constant $C$, we have
\begin{equation}
\label{condBF}
\E\left\vert X_{i}\right\vert ^{k}\leq k!C^{k-2}\E X_{i}^{2}.
\end{equation}
Then, for every $\varepsilon>0$, we have
$$
\PP\left(  \left\vert \sum_{i=1}^{n}X_{i}\right\vert >\varepsilon
\sqrt{\sum_{i=1}^{n}\E X_{i}^{2}}\right)  \leq2\exp\left\{
-\frac{\varepsilon^{2}}{4+2\varepsilon C/\sqrt{\sum_{i=1}^{n}%
\E X_{i}^{2}}}\right\}.  
$$
\end{lemma}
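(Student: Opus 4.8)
The plan is to run the classical Chernoff argument, adapting it to the Bernstein-type moment bound~(\ref{condBF}). Write $v=\sum_{i=1}^{n}\E X_{i}^{2}$ and $s=\varepsilon\sqrt{v}$, and aim first at the one-sided tail $\PP(\sum_{i=1}^{n}X_{i}>s)$, recovering the two-sided estimate by symmetry at the end. For every $t>0$, the exponential Markov inequality combined with the independence of the $X_{i}$ gives
\[
\PP\left(\sum_{i=1}^{n}X_{i}>s\right)\leq e^{-ts}\prod_{i=1}^{n}\E\left(e^{tX_{i}}\right),
\]
so the first task is to control each moment generating factor.

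To that end I would expand $\E(e^{tX_{i}})=1+\sum_{k\geq2}\frac{t^{k}}{k!}\E X_{i}^{k}$, the linear term vanishing because the $X_{i}$ are centered. Bounding $\E X_{i}^{k}\leq\E|X_{i}|^{k}$ and substituting~(\ref{condBF}) makes the factorials cancel and leaves a geometric series,
\[
\E\left(e^{tX_{i}}\right)\leq1+\E X_{i}^{2}\sum_{k\geq2}t^{k}C^{k-2}=1+\frac{t^{2}\,\E X_{i}^{2}}{1-tC},
\]
valid as soon as $tC<1$. Using $1+u\leq e^{u}$ and taking the product over $i$ then yields $\prod_{i}\E(e^{tX_{i}})\leq\exp\{t^{2}v/(1-tC)\}$, whence
\[
\PP\left(\sum_{i=1}^{n}X_{i}>s\right)\leq\exp\left\{-ts+\frac{t^{2}v}{1-tC}\right\}.
\]

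The decisive step is the choice of $t$, which is what produces the precise constants $4$ and $2C$ in the statement. Taking $t=s/(2v+Cs)$ one checks that $tC=Cs/(2v+Cs)<1$, that $1-tC=2v/(2v+Cs)$, and that the exponent collapses to exactly $-s^{2}/(4v+2Cs)$. Replacing $s=\varepsilon\sqrt{v}$ turns this into $-\varepsilon^{2}/(4+2\varepsilon C/\sqrt{v})$, the exponent appearing in the lemma.

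It remains to pass to the two-sided bound. Since $-X_{i}$ is again centered, has the same variance, and satisfies $\E|-X_{i}|^{k}=\E|X_{i}|^{k}$, the same argument controls $\PP(\sum_{i}X_{i}<-s)$, and a union bound over the two tails supplies the factor $2$. The only genuinely delicate point is the compatibility of the admissible range $tC<1$ with the optimizing choice of $t$; this causes no trouble here, since $Cs/(2v+Cs)<1$ for every $\varepsilon>0$. All remaining manipulations are routine.
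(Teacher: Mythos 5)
Your proof is correct: the Chernoff argument with the geometric-series bound on the moment generating function, the choice $t=s/(2v+Cs)$ (which indeed collapses the exponent to $-s^{2}/(4v+2Cs)$, i.e.\ to $-\varepsilon^{2}/(4+2\varepsilon C/\sqrt{v})$ after substituting $s=\varepsilon\sqrt{v}$), and the symmetry-plus-union-bound step are all sound. The paper itself gives no proof of this lemma, dismissing it as ``standard''; what you have written out is exactly the standard Bernstein--Fr\'echet argument being invoked, with the constant bookkeeping done correctly.
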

The proof is standard. Note that condition~(\ref{condBF}) is verified
under the boundedness assumption
$\forall i\geq1$,
$\left\vert X_{i}\right\vert  \leq C$.
In the next lemma, an asymptotic expansion of the estimated regression function $\widehat{r}_{n}\left(  x\right)  =e_{1}^{t}\mathbf{S}_{n}^{-1}X^{t}\mathbf{W}\mathcal{Z}$ is introduced.
\begin{lemma}
\label{lem9}
The estimated regression function $\widehat{r}_{n}\left(  x\right)$
can be rewritten as
$$
\widehat{r}_{n}\left(  x\right)  =\frac{1}{nf\left(  x\right)  }%
\sum_{i=1}^{n}Z_{i}~K_{h}\left(  X_{i}-x\right)  \sum_{j=0}^{k}%
u_{j}\left(  \frac{X_{i}-x}{h}\right)  ^{j}\left[  1+o_{p}\left(  1\right)
\right],  
$$
where $(u_{0,}u_{1},...,u_{k})$ is the first line of the matrix $\mathbf{S}%
^{^{-1}}$.
\end{lemma}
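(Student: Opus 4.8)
The plan is to start from the matricial definition $\widehat{r}_{n}(x)=e_{1}^{t}\mathbf{S}_{n}^{-1}\mathbf{X}^{t}\mathbf{W}\mathcal{Z}$ and to feed in the asymptotic inverse of $\mathbf{S}_n$ that is already available to us. Indeed, Proposition~\ref{prop1} gives $S_{n,j}=nh^{j}f(x)\mu_{j}[1+o_{p}(1)]$, which is exactly the statement that $\mathbf{H}^{-1}\mathbf{S}_{n}\mathbf{H}^{-1}=nf(x)\mathbf{S}[1+o_{p}(1)]$ entrywise, where $\mathbf{H}=diag(1,h,\dots,h^{k})$ and $\mathbf{S}=[\mu_{j+l}]$. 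Since $\mathbf{S}$ is the (invertible) moment matrix of the kernel, continuity of matrix inversion then yields $\mathbf{S}_{n}^{-1}=\frac{1}{nf(x)}\mathbf{H}^{-1}\mathbf{S}^{-1}\mathbf{H}^{-1}[1+o_{p}(1)]$, the very expression recorded in Appendix~I. So the first step is simply to invoke this relation.

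Next I would isolate the row vector $e_{1}^{t}\mathbf{S}_{n}^{-1}$. Because the top-left entry of $\mathbf{H}^{-1}$ equals $h^{0}=1$, one has $e_{1}^{t}\mathbf{H}^{-1}=e_{1}^{t}$, and by definition $e_{1}^{t}\mathbf{S}^{-1}=(u_{0},u_{1},\dots,u_{k})$ is the first line of $\mathbf{S}^{-1}$. Multiplying on the right by the remaining $\mathbf{H}^{-1}=diag(1,h^{-1},\dots,h^{-k})$ gives $e_{1}^{t}\mathbf{S}_{n}^{-1}=\frac{1}{nf(x)}(u_{0},u_{1}h^{-1},\dots,u_{k}h^{-k})[1+o_{p}(1)]$. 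In parallel I would write $\mathbf{X}^{t}\mathbf{W}\mathcal{Z}$ componentwise: its $j$-th entry ($j=0,\dots,k$) is $\sum_{i=1}^{n}(X_{i}-x)^{j}K_{h}(X_{i}-x)Z_{i}$, directly from the definitions of $\mathbf{X}$, $\mathbf{W}$ and $\mathcal{Z}$.

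Forming the inner product of these two and exchanging the order of the finite double summation over $j$ and $i$, then recognizing that $h^{-j}(X_{i}-x)^{j}=\left(\frac{X_{i}-x}{h}\right)^{j}$, leads to
\[
\widehat{r}_{n}(x)=\frac{1}{nf(x)}\sum_{i=1}^{n}Z_{i}K_{h}(X_{i}-x)\sum_{j=0}^{k}u_{j}\left(\frac{X_{i}-x}{h}\right)^{j}[1+o_{p}(1)],
\]
which is the claimed identity. The main point requiring care, and the only genuine obstacle, is the passage from the matrix-valued $[1+o_{p}(1)]$ attached to $\mathbf{S}_{n}^{-1}$ to a single scalar $[1+o_{p}(1)]$ factored in front of the final scalar sum; this is legitimate precisely because $\mathbf{S}$ is a fixed invertible matrix, so the perturbation $\mathbf{H}^{-1}\mathbf{S}_{n}\mathbf{H}^{-1}=nf(x)\mathbf{S}[1+o_{p}(1)]$ inverts to $\frac{1}{nf(x)}\mathbf{S}^{-1}[1+o_{p}(1)]$ with a uniform (scalar) remainder, the first row of which contracts against the bounded coordinate vector $\mathbf{X}^{t}\mathbf{W}\mathcal{Z}/(\,e_{1}^{t}\mathbf{S}^{-1}\mathbf{H}^{-1}\mathbf{X}^{t}\mathbf{W}\mathcal{Z}\,)$ without spoiling the $o_{p}(1)$ control. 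Everything else is bookkeeping that follows the lines of Fan and Gijbels~\cite{FanGij}.
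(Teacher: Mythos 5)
Your proof is correct and follows essentially the same route as the paper's: the paper merely cites the Fan--Gijbels equivalent-kernel expansion $\widehat{r}_{n}(x)=\frac{1}{nhf(x)}\sum_{i=1}^{n}Z_{i}\,K_{0}^{\ast}\!\left(\frac{X_{i}-x}{h}\right)\left[1+o_{p}(1)\right]$ with $K_{0}^{\ast}(t)=e_{1}^{t}\mathbf{S}^{-1}(1,t,\dots,t^{k})^{t}K(t)$ and then writes this kernel out, which is precisely the computation you reconstruct from Proposition~\ref{prop1} via $\mathbf{S}_{n}^{-1}=\frac{1}{nf(x)}\mathbf{H}^{-1}\mathbf{S}^{-1}\mathbf{H}^{-1}\left[1+o_{p}(1)\right]$, the identity $e_{1}^{t}\mathbf{H}^{-1}=e_{1}^{t}$, and the contraction against $\mathbf{X}^{t}\mathbf{W}\mathcal{Z}$. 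The one delicate step you flag --- converting the entrywise matrix $o_{p}(1)$ into a single scalar multiplicative factor, which is not automatic since the signed weights $\sum_{j}u_{j}t^{j}$ permit cancellation --- is treated no more rigorously by the paper itself, which buries it in the citation to~\cite{FanGij}, so your argument matches the paper's own level of detail.
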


\begin{proof}
It is known from the local polynomial fitting theory that $\widehat{r}%
_{n}\left(  x\right)  =\widehat{\beta}_{0}=e_{1}^{t}\mathbf{S}_{n}^{-1}%
$\textbf{$x$}$^{t}\mathbf{W}\mathcal{Z}$ admits the following asymptotic expression%
\[
\widehat{r}_{n}\left(  x\right) =\frac{1}{nhf\left(  x\right)  }\sum_{i=1}^{n}Z_{i}~K_{0}^{\ast
}\left(  \frac{X_{i}-x}{h}\right)  \left[  1+o_{p}\left(  1\right)  \right],
\]
where%
\[
K_{0}^{\ast}\left(  t\right)  =e_{1}^{t}\mathbf{S}^{-1}\left(  1,t,...,t^{k}%
\right)  K\left(  t\right)
\]
is the so-called \textit{equivalent kernel}, see~\cite{FanGij}. 
The remaining of the proof consists in 
explicitly writing this equivalent kernel. It is worth noticing that
$o_{p}\left(  1\right)  $ depends exclusively of the design $\mathcal{X}$.
\end{proof}

\noindent The following lemma is dedicated to the control of the local
variations of the derivatives of $g^p$, when $p\to\infty$, on a neighborhood of size $h$.
\begin{lemma}
\label{lem11}
Suppose $g$ is a  $C_{k+1}$ function with $k<p$.
If, moreover, $ph\rightarrow0$ and $\left\vert u-v\right\vert \leq h$, then
\[
\frac{\partial^{k+1}g^{p}}{\partial x^{k+1}}\left(  v\right)  =\frac
{\partial^{k+1}g^{p}}{\partial x^{k+1}}\left(  u\right)  +\sum_{j=1}%
^{k+1}\frac{p!}{\left(  p-j\right)  !}g^{p-j}\left(  u\right)  o(1).
\]
\end{lemma}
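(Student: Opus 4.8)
The plan is to reduce the statement to a pointwise estimate on the ratio $g(v)/g(u)$ raised to a power of order $p$, exploiting the explicit structure of $\partial^{k+1}g^{p}/\partial x^{k+1}$ already used in Appendix~I. By the expansion of the derivatives of $g^{p}$ recorded in $(\ref{4-1})$ (Fa\`a di Bruno applied to $g^{p}$), one may write
\[
\frac{\partial^{k+1}g^{p}}{\partial x^{k+1}}(x)=\sum_{j=1}^{k+1}\frac{p!}{(p-j)!}g^{p-j}(x)\phi_{j}(x),
\]
where each $\phi_{j}$ is a fixed continuous function of $g$ and its derivatives up to order $k+1$, independent of $p$ (the combinatorial constants being absorbed into $\phi_{j}$), and the factorials $p!/(p-j)!$ are well defined because $k<p$. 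Writing $D(x)$ for this $(k+1)$-th derivative, the announced identity is exactly the assertion that $D(v)-D(u)=\sum_{j=1}^{k+1}\frac{p!}{(p-j)!}g^{p-j}(u)\,o(1)$.

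First I would form the difference term by term and factor $g^{p-j}(u)$ out of the $j$-th summand,
\[
D(v)-D(u)=\sum_{j=1}^{k+1}\frac{p!}{(p-j)!}g^{p-j}(u)\left[\left(\frac{g(v)}{g(u)}\right)^{p-j}\phi_{j}(v)-\phi_{j}(u)\right],
\]
so that it suffices to prove, for each of the finitely many indices $j\in\{1,\dots,k+1\}$, that the bracketed quantity is $o(1)$, uniformly over $u,v$ with $|u-v|\le h$ in a fixed neighbourhood of $x$. The hard part will be the control of the power $(g(v)/g(u))^{p-j}$. Since $g$ is $C^{1}$ and bounded below by $g_{\min}>0$, the mean value theorem gives $g(v)/g(u)=1+O(h)$, with a constant governed by $\sup|g'|$ and $g_{\min}$ on the neighbourhood; but this ratio is then raised to a power $p-j\le p$ tending to infinity. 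Writing $(g(v)/g(u))^{p-j}=\exp\big((p-j)\log(g(v)/g(u))\big)$ and using $\log(1+O(h))=O(h)$, the exponent is $O(ph)$, which tends to $0$ precisely under the hypothesis $ph\to0$; hence $(g(v)/g(u))^{p-j}=1+o(1)$. This is the step where $ph\to0$ is indispensable, since otherwise a deviation of order $h$ from $1$ could be amplified without bound by the exponent $p$.

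To conclude, I would combine this estimate with the continuity of $\phi_{j}$, which yields $\phi_{j}(v)=\phi_{j}(u)+o(1)$ because $|u-v|\le h\to0$, and with the local boundedness of $\phi_{j}$. The bracket then equals $(1+o(1))(\phi_{j}(u)+o(1))-\phi_{j}(u)=\phi_{j}(u)\,o(1)+o(1)=o(1)$, so that the $j$-th summand is $\frac{p!}{(p-j)!}g^{p-j}(u)\,o(1)$. Summing over the finitely many indices $j$ gives the claimed expansion.
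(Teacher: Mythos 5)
Your proposal is correct, and its skeleton coincides with the paper's: both start from the expansion~(\ref{4-1}) of $\partial^{k+1}g^{p}/\partial x^{k+1}$ into terms $\frac{p!}{(p-j)!}g^{p-j}\phi_{j}$, reduce the claim to a term-by-term comparison, and use continuity of the $\phi_{j}$ together with the hypothesis $ph\to0$ to kill the variation. The difference lies in how the dangerous factor $g^{p-j}$ is handled. The paper splits the difference $g^{p-j}(u)\phi_{j}(u)-g^{p-j}(v)\phi_{j}(v)$ additively by the triangle inequality and then invokes Lemma~\ref{lem10} (quoted from~\cite{JMVA2}), which asserts precisely that $g^{p}$ varies by a relative factor $1+O(ph)$ over distances $h$; you instead factor $g^{p-j}(u)$ out multiplicatively and prove the needed estimate $\left(g(v)/g(u)\right)^{p-j}=1+o(1)$ from scratch, via the mean value theorem, the lower bound $g\geq g_{\min}$, and the bound $\exp\left((p-j)\log(1+O(h))\right)=\exp(O(ph))$. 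Your route is thus self-contained where the paper delegates the key inequality to an external lemma, and the logarithmic reformulation makes explicit why $ph\to0$ is exactly the right condition (a relative deviation of order $h$ amplified by an exponent of order $p$); the paper's version is shorter because Lemma~\ref{lem10} is already available in its appendix, and its additive split keeps the two sources of variation (the power $g^{p-j}$ and the coefficient $\phi_{j}$) cleanly separated. Both arguments deliver the same uniformity over $|u-v|\leq h$, and your observation that the $\phi_{j}$ are $p$-independent continuous functions built from $g,g',\dots,g^{(k+1)}$ matches the recurrence argument establishing~(\ref{4-1}).
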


\begin{proof}
From $\frac{\partial g^{p}}{\partial x}=pg^{p-1}\frac{\partial g}{\partial x}$
and a recurrence argument it is easily checked that
\begin{equation}
\frac{\partial^{k+1}g^{p}}{\partial x^{k+1}}=\sum_{j=1}^{k+1}%
\frac{p!}{\left(  p-j\right)  !}g^{p-j}\phi_j,\label{4-1}%
\end{equation}
where the $\phi_{j}$ are continuous functions. The triangular inequality
entails
\begin{align*}
\left\vert g^{p-j}\left(  u\right)  \phi_{j}\left(  u\right)  -g^{p-j}\left(
v\right)  \phi_{j}\left(  v\right)  \right\vert 
&\leq\left\vert g^{p-j}\left(
u\right)  \right\vert \left\vert \phi_{j}\left(  u\right)  -\phi_{j}\left(
v\right)  \right\vert \\
&+\left\vert \phi_{j}\left(  v\right)  \right\vert
\left\vert g^{p-j}\left(  u\right)  -g^{p-j}\left(  v\right)  \right\vert,
\end{align*}
and, from Lemma~\ref{lem10}, if $ph\rightarrow0$ and $\left\vert
u-v\right\vert \leq h$ we get, for sufficiently large $n$, 
\[
\left\vert g^{p-j}\left(  u\right)  \phi_{j}\left(  u\right)  -g^{p-j}\left(
v\right)  \phi_{j}\left(  v\right)  \right\vert \leq \left\vert
g^{p-j}\left(  u\right)  \right\vert o(1)+\Cu D_{j}\left\vert g^{p-j}\left(
u\right)  \right\vert \left(  p-j\right)  h,
\]
where $D_{j}= \sup_{s\in\left[  u,v\right]  }| \phi_{j}(s)|$. 
Thus,
\begin{align*}
g^{p-j}\left(  v\right)  \phi_{j}\left(  v\right)  &=g^{p-j}\left(  u\right)
\phi_{j}\left(  u\right)  +g^{p-j}\left(  u\right)  (O\left(  ph\right)+o(1))\\
 &=g^{p-j}\left(  u\right)
\phi_{j}\left(  u\right)  +g^{p-j}\left(  u\right)  o(1),
\end{align*}
and replacing in~(\ref{4-1}) yields
\begin{align*}
\frac{\partial^{k+1}g^{p}}{\partial x^{k+1}}\left(  v\right)   &
=\sum_{j=1}^{k+1}\frac{p!}{\left(  p-j\right)  !}g^{p-j}\left(
v\right)  \phi_{\substack{j\\}}\left(  v\right)  \\
&  =\sum_{j=1}^{k+1}\frac{p!}{\left(  p-j\right)  !}g^{p-j}\left(
u\right)  \phi_{\substack{j\\}}\left(  u\right)  +\sum_{j=1}%
^{k+1}\frac{p!}{\left(  p-j\right)  !}g^{p-j}\left(  u\right) o(1) \\
  &
=\frac{\partial^{k+1}g^{p}}{\partial x^{k+1}}\left(  u\right)  +
\sum_{j=1}^{k+1}\frac{p!}{\left(  p-j\right)  !} g^{p-j}\left(  u\right) o(1), 
\end{align*}
and the result is proved. 
\end{proof}

\noindent Let us consider, for $i=1,\dots,n$ the random variables defined by
\[
\xi_{i}=\frac{nh}{pg^{p}\left(  x\right)  }a\left(  X_{i}\right)  \left(
\left(  p+1\right)  Y_{i}^{p}-g^{p}\left(  X_{i}\right)  \right).
\]
The next two lemmas are preparing the application of the
Bernstein-Fr\'echet inequality given in Lemma~\ref{lem8}.
First, it is established that the $\xi_i$ are bounded
random variables. Second, a control of the conditional
variance $\V\left(  \sum\nolimits_{i=1}^{n}\xi_{i}/\mathcal{X}\right)$
is provided. 
\begin{lemma}
\label{lem12}
There exists a positive constant $\Cd$ such that
$\left\vert \xi_{i}\right\vert \leq \Cd$ for all $i=1,\dots,n$.
\end{lemma}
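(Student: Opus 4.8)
The plan is to substitute the explicit form of $a(X_i)$, cancel the factors $nh$, and thereby reduce $\xi_i$ to a bounded deterministic kernel factor multiplied by a normalized residual that is controlled through the geometry of the support. First I would insert
\[
a\left(X_{i}\right)=\frac{1}{nf\left(x\right)}K_{h}\left(X_{i}-x\right)\sum_{j=0}^{k}u_{j}\left(\frac{X_{i}-x}{h}\right)^{j}
\]
into the definition of $\xi_i$ and use $K_{h}(X_i-x)=K((X_i-x)/h)/h$. Writing $t_i=(X_i-x)/h$, the $nh$ cancel and one is left with
\[
\left\vert\xi_{i}\right\vert=\frac{1}{pg^{p}\left(x\right)f\left(x\right)}\left\vert K\left(t_{i}\right)\sum_{j=0}^{k}u_{j}t_{i}^{j}\right\vert\left\vert\left(p+1\right)Y_{i}^{p}-g^{p}\left(X_{i}\right)\right\vert.
\]
The kernel--polynomial factor is exactly the equivalent kernel $K_0^*$ appearing in Lemma~\ref{lem9}; since $(u_0,\dots,u_k)$ is a fixed vector (the first line of $\mathbf{S}^{-1}$, independent of $n$) and $K$ is bounded with support in $[-1,1]$, the product $K(t)\sum_{j=0}^k u_j t^j$ is bounded on $\mathbb{R}$ by some constant $M_0$, and it vanishes outside $\{|t_i|\le 1\}$, i.e. outside $\{|X_i-x|\le h\}$.

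Next I would bound the residual using the support constraint $0\le Y_i\le g(X_i)$, which gives $0\le Y_i^p\le g^p(X_i)$ and hence $-g^p(X_i)\le(p+1)Y_i^p-g^p(X_i)\le p\,g^p(X_i)$. For $p\ge 1$ this yields $\left\vert(p+1)Y_i^p-g^p(X_i)\right\vert\le p\,g^p(X_i)$, so that the factor $p$ cancels and
\[
\left\vert\xi_{i}\right\vert\le\frac{M_{0}}{f\left(x\right)}\frac{g^{p}\left(X_{i}\right)}{g^{p}\left(x\right)}\mathbf{1}\left\{\left\vert X_{i}-x\right\vert\le h\right\}.
\]
It remains to control the ratio $g^{p}(X_i)/g^{p}(x)$ on the event $\{|X_i-x|\le h\}$, and this is where the whole difficulty concentrates.

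Since $g$ is $C^{1}$ and $g\ge g_{\min}>0$, I would fix a compact neighborhood of $x$ on which $|g'|\le G_1$; for $n$ large enough that $h$ is smaller than its radius, the mean value theorem gives $|g(X_i)/g(x)-1|\le hG_1/g_{\min}$ on $\{|X_i-x|\le h\}$, whence
\[
\frac{g^{p}\left(X_{i}\right)}{g^{p}\left(x\right)}\le\left(1+\frac{hG_{1}}{g_{\min}}\right)^{p}\le\exp\left(\frac{phG_{1}}{g_{\min}}\right).
\]
The main obstacle is precisely this exponential: a crude first-order bound on $g(X_i)/g(x)$ is raised to the power $p\to\infty$, so the estimate is worthless unless $ph$ stays bounded. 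Invoking the standing hypothesis $ph\to 0$, the right-hand side converges to $1$ and is therefore bounded over $n$ by some constant $M_1$. Setting $\Cd=M_0M_1/f(x)$ then gives $|\xi_i|\le\Cd$ uniformly in $i=1,\dots,n$ (and, enlarging the constant to absorb the finitely many initial indices, uniformly in $n$), which is the claim.
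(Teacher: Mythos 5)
Your proof is correct and follows essentially the same route as the paper: bound the weight $a(X_i)$ by $O(1/(nh))$ supported on $\{|X_i-x|\le h\}$, bound the residual by $p\,g^{p}(X_i)$ using $0\le Y_i\le g(X_i)$, and control the ratio $g^{p}(X_i)/g^{p}(x)$ via $ph\to 0$. The only difference is that where you re-derive the bound $g^{p}(X_i)\le g^{p}(x)\exp(phG_1/g_{\min})$ from the mean value theorem, the paper simply invokes Lemma~\ref{lem10} (quoted from~\cite{JMVA2}), which gives $g^{p}(X_i)\le g^{p}(x)\left(1+\Cu ph\right)$ for $|X_i-x|\le h$.
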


\begin{proof}
Since the kernel $K$ is bounded and has bounded support,
it is easily seen that $a\left(  X_{i}\right)  =0$ if $\left\vert
X_{i}-x\right\vert >h$ and that $a\left(  X_{i}\right)  =O\left(  \frac{1}%
{nh}\right)  $ uniformly in $i$. Noticing that $Y_{i}^{p}\leq$ $g^{p}\left(
X_{i}\right)  $ and using Lemma~\ref{lem10}, we get%
\begin{align}
\left\vert \xi_{i}\right\vert  &  =\frac{nh}{pg^{p}\left(  x\right)
}\left\vert a\left(  X_{i}\right)  \right\vert \left\vert \left(  \left(
p+1\right)  Y_{i}^{p}-g^{p}\left(  X_{i}\right)  \right)  \right\vert
\nonumber\\
&  \leq\frac{nh}{pg^{p}\left(  x\right)  }\left\vert a\left(  X_{i}\right)
\right\vert \left\vert \left(  p+1\right)  g^{p}\left(  X_{i}\right)
\right\vert \label{4-7}\\
&  \leq\frac{nh}{pg^{p}\left(  x\right)  }\left(  p+1\right)  g^{p}\left(
x\right)  \left(  1+ \Cu ph\right)  O\left(  \frac{1}{nh}\right)  \nonumber\\
&  =O\left(  1\right)  \left(  1+O\left(  ph\right)  \right),  \nonumber
\end{align}
and the result is proved.
\end{proof}

\begin{lemma}
\label{lem13}
There exists a positive constant $\Ct$ such that%
\begin{equation}
 \label{4-5}
\frac{nh}{p}\left/\V\left(  \sum_{i=1}^{n}\xi_{i}/\mathcal{X}\right)  \right. 
=\Ct\left[  1+o_{p}\left(  1\right)  \right], 
\end{equation}
or equivalently,
\begin{equation}
 \label{4-6}%
\frac{nh}{p}\left/\sqrt{\V\left(  \sum_{i=1}^{n}\xi_{i}/\mathcal{X}
\right)  }  \right. =\sqrt{\frac{nh}{p}}\sqrt{\Ct}\left[  1+o_{p}\left(  1\right)
\right]. 
\end{equation}
\end{lemma}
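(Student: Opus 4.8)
The plan is to compute $\V\!\left(\sum_{i=1}^n\xi_i/\mathcal{X}\right)$ in closed form and extract its leading term with the help of Proposition~\ref{prop1}. First I would exploit the independence of the pairs $(X_i,Y_i)$: conditionally on the design $\mathcal{X}$ the randomness left in each $\xi_i$ comes only from the independent $Y_i$, so the $\xi_i$ are conditionally independent and
\[
\V\left(\sum_{i=1}^n \xi_i/\mathcal{X}\right)=\sum_{i=1}^n \V(\xi_i/\mathcal{X}).
\]
Since the prefactor $\frac{nh}{pg^p(x)}a(X_i)$ is $\mathcal{X}$-measurable, subtracting the $\mathcal{X}$-measurable term $g^p(X_i)$ does not change the variance, and each summand reduces to $\left(\frac{nh}{pg^p(x)}\right)^2 a^2(X_i)\,\V\!\left((p+1)Y_i^p/X_i\right)$. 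Under the uniformity assumption, the very computation carried out before Proposition~\ref{prop3} gives $\V\!\left((p+1)Y_i^p/X_i\right)=\frac{p^2}{2p+1}g^{2p}(X_i)$, so that
\[
\V\left(\sum_{i=1}^n \xi_i/\mathcal{X}\right)=\frac{(nh)^2}{(2p+1)g^{2p}(x)}\sum_{i=1}^n a^2(X_i)\,g^{2p}(X_i).
\]

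The second step is to rewrite the remaining sum in terms of the $S_{n,j}^{\ast}$. Substituting $a(X_i)=\frac{1}{nf(x)}K_h(X_i-x)\sum_{j=0}^k u_j\big(\frac{X_i-x}{h}\big)^j$ and expanding the square of the polynomial as $\big(\sum_j u_j t^j\big)^2=\sum_{0\le j,l\le k}u_j u_l\, t^{j+l}$ with $t=(X_i-x)/h$, I would factor out the powers of $h$ and recognize $S_{n,j+l}^{\ast}=\sum_i (X_i-x)^{j+l}K_h^2(X_i-x)g^{2p}(X_i)$, obtaining
\[
\sum_{i=1}^n a^2(X_i)\,g^{2p}(X_i)=\frac{1}{n^2f^2(x)}\sum_{0\le j,l\le k}u_j u_l\, h^{-(j+l)}S_{n,j+l}^{\ast}.
\]

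Finally, since $h\to0$, $nh\to\infty$ and $ph\to0$, I would plug in the asymptotics~(\ref{1.4}), namely $S_{n,j+l}^{\ast}=nh^{j+l-1}g^{2p}(x)f(x)\nu_{j+l}[1+o_p(1)]$. As only $(k+1)^2$ terms are involved, the error terms are uniform and factor out, yielding
\[
\sum_{i=1}^n a^2(X_i)\,g^{2p}(X_i)=\frac{g^{2p}(x)}{nhf(x)}\Big(\sum_{0\le j,l\le k}u_j u_l\,\nu_{j+l}\Big)[1+o_p(1)].
\]
Because $(u_0,\dots,u_k)^t=\mathbf{S}^{-1}e_1$ and $\mathbf{S}^{\ast}=[\nu_{j+l}]$, the quadratic form equals $e_1^t\mathbf{S}^{-1}\mathbf{S}^{\ast}\mathbf{S}^{-1}e_1=C$, the same (positive) constant as in Proposition~\ref{prop3}. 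Inserting this and using $\frac{2p}{2p+1}\to1$ gives $\V(\sum_i\xi_i/\mathcal{X})=\frac{nh}{p}\frac{C}{2f(x)}[1+o_p(1)]$, which is exactly~(\ref{4-5}) with $\Ct=2f(x)/C$; the equivalent form~(\ref{4-6}) then follows by taking square roots, since $\sqrt{1+o_p(1)}=1+o_p(1)$. The one point requiring care is the quadratic-form bookkeeping in the last two steps---matching $\sum_{j,l}u_j u_l\nu_{j+l}$ with $e_1^t\mathbf{S}^{-1}\mathbf{S}^{\ast}\mathbf{S}^{-1}e_1$ and checking that the finite family of $o_p(1)$ remainders coming from Proposition~\ref{prop1} can indeed be pulled out uniformly; everything else is elementary algebra.
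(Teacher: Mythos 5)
Your proposal is correct and follows essentially the same route as the paper's proof: conditional independence of the $\xi_i$, the uniform-distribution variance $\V(Z_i/X_i)=\frac{p^2}{2p+1}g^{2p}(X_i)$, substitution of $a(X_i)$ and expansion of the squared polynomial into the $S^{\ast}_{n,j+l}$, and then the asymptotics~(\ref{1.4}) from Proposition~\ref{prop1}. The only difference is cosmetic: you additionally identify the constant explicitly as $\Ct=2f(x)/C$ via the quadratic form $e_1^t\mathbf{S}^{-1}\mathbf{S}^{\ast}\mathbf{S}^{-1}e_1$, which the paper leaves implicit in the sum $\sum_{j,l}u_ju_l\nu_{j+l}$.
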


\begin{proof}
Recalling that
\[
\V\left(  \left(  p+1\right)  Y_{i}^{p}-g^{p}\left(  X_{i}\right)
/\mathcal{X}\right)  =\V\left(  Z_{i}/\mathcal{X}\right)  =\frac{p^{2}}%
{2p+1}g^{2p}\left(  X_{i}\right),
\]
we can write
\begin{align*}
&\V\left(  \sum_{i=1}^{n}\xi_{i}/\mathcal{X}\right)  \\ 
&=\frac{(nh)^{2}}{g^{2p}\left(  x\right)  }\frac{1}{2p+1}\sum_{i=1}%
^{n}a^{2}\left(  X_{i}\right)  g^{2p}\left(  X_{i}\right)  \\
&  =\frac{h^{2}}{g^{2p}\left(  x\right)  }\frac{1}{2p+1}\frac{1}{f^{2}\left(
x\right)  }\sum_{i=1}^{n}K_{h}^{2}\left(  X_{i}-x\right)  \left[
\sum_{j=0}^{k}u_{j}\left(  \frac{X_{i}-x}{h}\right)  ^{j}\right]  ^{2}%
g^{2p}\left(  X_{i}\right)  \\
&  =\frac{h^{2}}{g^{2p}\left(  x\right)  }\frac{1}{2p+1}\frac{1}{f^{2}\left(
x\right)  }\sum_{j,l=0}^{k}u_{j}u_{l}\sum_{i=1}^{n}K_{h}^{2}\left(
X_{i}-x\right)  \left(  \frac{X_{i}-x}{h}\right)  ^{j+l}g^{2p}\left(
X_{i}\right)  \\
&  =\frac{h^{2}}{g^{2p}\left(  x\right)  }\frac{1}{2p+1}\frac{1}{f^{2}\left(
x\right)  }\sum_{j,l=0}^{k}u_{j}u_{l}\frac{1}{h^{j+l}}S_{n,j+l}^{\ast}.
\end{align*}
Now, substituting the asymptotic expression for $S_{n,j}^{\ast}$ into the
above expression yields
\[
\V\left(  \sum_{i=1}^{n}\xi_{i}/\mathcal{X}\right)  =\frac{nh}%
{2p+1}\frac{1}{f\left(  x\right)  }\sum_{j,l=0}^{k}u_{j}u_{l}\nu_{j+l}\left[
1+o_{p}\left(  1\right)  \right],
\]
and the parts $\left(  \ref{4-5}\right)  ~$and $\left(  \ref{4-6}\right)  $ of
this lemma follow.
\end{proof}

\noindent
The next two lemmas are the key tools to prove Theorem~\ref{th4}.
Lemma~\ref{lem14} is mainly a consequence of the 
Bernstein-Fr\'{e}chet inequality given in Lemma~\ref{lem8}.
Lemma~\ref{lem15} is dedicated to the control of the random variable
$\Delta_n$ introduced in~(\ref{defdelta}).
\begin{lemma}
\label{lem14}
There exists a positive constant $\Cq$ such that for every $\varepsilon>0$,
\[
\PP\left(  \left\vert \widehat{r}_{n}\left(  x\right)  -\E\left(  \widehat{r}%
_{n}\left(  x\right)  /\mathcal{X}\right)  \right\vert \geq\varepsilon
r_{n}\left(  x\right)  /\mathcal{X}\right)  \leq2\exp\left\{  -\Cq\frac
{nh}{p}\varepsilon^{2}\left[  1+o_{p}\left(  1\right)  \right]  \right\},
\]
where the sequence $\left[  1+o_{p}\left(  1\right)  \right]  $ depends
exclusively on the design $\mathcal{X}$ .
\end{lemma}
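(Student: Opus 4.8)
The plan is to reduce the deviation of $\widehat{r}_n(x)$ around its conditional mean to a deviation of the normalized sum $\sum_{i=1}^n\xi_i$, and then to invoke the Bernstein--Fr\'echet inequality of Lemma~\ref{lem8} conditionally on the design. Throughout I work conditionally on $\mathcal{X}$, so that every $o_p(1)$ produced by Lemma~\ref{lem9} or Lemma~\ref{lem13} is $\mathcal{X}$-measurable and may be handled as a deterministic factor inside the conditional probability.

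First I would rewrite the centered estimator. By Lemma~\ref{lem9}, $\widehat{r}_n(x)=\widehat{s}_n(x)\,[1+o_p(1)]$ with $\widehat{s}_n(x)=\sum_{i=1}^n a(X_i)Z_i$, the bracket depending only on $\mathcal{X}$; hence $\widehat{r}_n(x)-\E(\widehat{r}_n(x)/\mathcal{X})=[1+o_p(1)]\,(\widehat{s}_n(x)-\E(\widehat{s}_n(x)/\mathcal{X}))$. Since the $a(X_i)$ are $\mathcal{X}$-measurable and $\E(Z_i/\mathcal{X})=g^p(X_i)$ under the uniform conditional law, the centered part is $\sum_{i=1}^n a(X_i)(Z_i-g^p(X_i))$. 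Comparing with the definition of $\xi_i$ gives $a(X_i)(Z_i-g^p(X_i))=\frac{p\,g^p(x)}{nh}\xi_i$, so that, using $r_n(x)=g^p(x)$,
\[
\widehat{r}_n(x)-\E(\widehat{r}_n(x)/\mathcal{X})=[1+o_p(1)]\,\frac{p\,r_n(x)}{nh}\sum_{i=1}^n\xi_i .
\]
Consequently the event $\{|\widehat{r}_n(x)-\E(\widehat{r}_n(x)/\mathcal{X})|\geq\varepsilon r_n(x)\}$ is exactly $\{|\sum_i\xi_i|\geq\varepsilon(nh/p)[1+o_p(1)]\}$.

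Next I would verify the hypotheses of Lemma~\ref{lem8} conditionally on $\mathcal{X}$: the $\xi_i$ are conditionally independent because the pairs $(X_i,Y_i)$ are independent, they are conditionally centered because $\E(\xi_i/\mathcal{X})=0$, and they are uniformly bounded by $\Cd$ by Lemma~\ref{lem12}, so condition~(\ref{condBF}) holds with $C=\Cd$. Writing $V_n=\V(\sum_i\xi_i/\mathcal{X})$ and expressing the threshold as $\varepsilon(nh/p)[1+o_p(1)]=\varepsilon'\sqrt{V_n}$, relation~(\ref{4-6}) gives $\varepsilon'=\varepsilon\sqrt{\Ct\,nh/p}\,[1+o_p(1)]$. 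Applying Lemma~\ref{lem8} and simplifying the exponent with~(\ref{4-5}) and~(\ref{4-6}) yields
\[
\frac{(\varepsilon')^2}{4+2\varepsilon'\Cd/\sqrt{V_n}}=\frac{nh}{p}\,\frac{\varepsilon^2}{4/\Ct+2\varepsilon\Cd}\,[1+o_p(1)] .
\]

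The only delicate point is the constant $\Cq$: the denominator above keeps an $\varepsilon$-dependence through $2\varepsilon\Cd$, so the genuine prefactor is $(4/\Ct+2\varepsilon\Cd)^{-1}$ rather than a single constant. I would close the argument by observing that the lemma is used only for $\varepsilon$ in a bounded range---in Theorem~\ref{th4} with $\varepsilon=\alpha_n/2\to0$---over which $(4/\Ct+2\varepsilon\Cd)^{-1}$ is bounded below by a positive constant that may be taken as $\Cq$ (with $\Cq\to\Ct/4$ as $\varepsilon\to0$). This produces the stated bound $2\exp\{-\Cq(nh/p)\varepsilon^2[1+o_p(1)]\}$, the $[1+o_p(1)]$ being $\mathcal{X}$-measurable by construction. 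I expect the main obstacle to be exactly this bookkeeping---making sure the conditioning on $\mathcal{X}$ legitimately converts the various $o_p(1)$ terms into deterministic multipliers and that the exponent is correctly normalized through Lemma~\ref{lem13}---since the probabilistic core, the exponential tail bound itself, is delivered directly by Lemma~\ref{lem8}.
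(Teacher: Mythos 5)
Your proof follows essentially the same route as the paper's: reduce via Lemma~\ref{lem9} to the centered sum $\sum_{i}\xi_i$, check conditional independence, centering and boundedness (Lemma~\ref{lem12}), normalize by the conditional variance using~(\ref{4-5}) and~(\ref{4-6}) from Lemma~\ref{lem13}, and apply the Bernstein--Fr\'echet inequality of Lemma~\ref{lem8} conditionally on $\mathcal{X}$. You are in fact slightly more careful than the paper on the one delicate point: the paper absorbs the $\varepsilon$-dependent denominator $4+2\varepsilon \Cd\Ct$ into a single constant $\Cq$ without comment, whereas you correctly note that this is legitimate only for $\varepsilon$ in a bounded range, which suffices because the lemma is invoked with $\varepsilon=\alpha_n/2\to 0$.
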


\begin{proof}
Following the asymptotic expression of $\widehat{r}_{n}\left(  x\right)  $
in Lemma~\ref{lem9}, we can write%
\begin{align*}
&  \PP\left(  \left\vert \widehat{r}_{n}\left(  x\right)  -\E\left(  \widehat
{r}_{n}\left(  x\right)  /\mathcal{X}\right)  \right\vert \geq\varepsilon
r_{n}\left(  x\right)  /\mathcal{X}\right) \\
&  =\PP\left(  \left\vert \sum_{i=1}^{n}a\left(  X_{i}\right)  \left(
Z_{i}-\E\left(  Z_{i}/\mathcal{X}\right)  \right)  \right\vert \left[
1+o_{p}\left(  1\right)  \right]  \geq\varepsilon r_{n}\left(  x\right)
/\mathcal{X}\right) \\
&  =\PP\left(  \left\vert \sum_{i=1}^{n}a\left(  X_{i}\right)  \left(
\left(  p+1\right)  Y_{i}^{p}-g^{p}\left(  X_{i}\right)  \right)  \right\vert
\geq\left[  1+o_{p}\left(  1\right)  \right]  \varepsilon g^{p}\left(
x\right)  /\mathcal{X}\right).
\end{align*}
It is worth noticing that, conditionally to $\mathcal{X}$, the
sequence $o_{p}\left(  1\right)  $ can be seen as a deterministic sequence
converging to $0$. We now introduce the bounded variables $\xi_{i}$ 
(see Lemma~\ref{lem12}).
In accordance with the Bernstein-Fr\'{e}chet 
inequality given in Lemma~\ref{lem8}, and with the expressions~(\ref{4-5}) and~(\ref{4-6})
  in Lemma~\ref{lem13}, we write%
\begin{align*}
&  \PP\left(  \left\vert \widehat{r}_{n}\left(  x\right)  -\E\left(  \widehat
{r}_{n}\left(  x\right)  /\mathcal{X}\right)  \right\vert \geq\varepsilon
r_{n}\left(  x\right)  /\mathcal{X}\right) \\
&= \PP\left(  \left\vert
\sum\nolimits_{i=1}^{n}\xi_{i}\right\vert \geq\left[  1+o_{p}\left(  1\right)
\right]  \varepsilon\frac{nh}{p}/\mathcal{X}\right)  \\
&  =\PP\left(  \left\vert \sum_{i=1}^{n}\xi_{i}\right\vert
\geq\varepsilon\left[  1+o_{p}\left(  1\right)  \right]  \frac{nh}%
{p\sqrt{\V\left(  \sum_{i=1}^{n}\xi_{i}/x\right)  }}\sqrt{\V\left(
\sum\nolimits_{i=1}^{n}\xi_{i}/x\right)  }/\mathcal{X}\right)  \\
&  \leq2\exp\left\{  -\frac{\left(  \varepsilon\left[  1+o_{p}\left(
1\right)  \right]  \frac{nh}{p\sqrt{\V\left(  \sum_{i=1}^{n}\xi
_{i}/\mathcal{X}\right)  }}\right)  ^{2}}{4+2\varepsilon\left[  1+o_{p}\left(
1\right)  \right]  \frac{nh}{p\sqrt{\V\left(  \sum_{i=1}^{n}\xi
_{i}/\mathcal{X}\right)  }}\Cd/\sqrt{\V\left(  \sum_{i=1}^{n}%
\xi_{i}/\mathcal{X}\right)  }}\right\}  \\
&  =2\exp\left\{  -\frac{\left(  \varepsilon\sqrt{\frac{nh}{p}}\sqrt{\Ct%
}\left[  1+o_{p}\left(  1\right)  \right]  \right)  ^{2}}{4+\Cd%
\varepsilon\left[  1+o_{p}\left(  1\right)  \right]  \frac{nh}{p}/\V\left(
\sum_{i=1}^{n}\xi_{i}/\mathcal{X}\right)  }\right\}  \\
&  =2\exp\left\{  -\frac{\varepsilon^{2}\frac{nh}{p}^{{}}\Ct\left[
1+o_{p}\left(  1\right)  \right]  }{4+\Cd \Ct\varepsilon\left[
1+o_{p}\left(  1\right)  \right]  }\right\} \\
& \leq2\exp\left\{  -\Cq\frac
{nh}{p}\varepsilon^{2}\left[  1+o_{p}\left(  1\right)  \right]  \right\},
\end{align*}
and the conclusion follows.
\end{proof}

\begin{lemma}
\label{lem15}
The random variable $\Delta_n$ is bounded conditionally to $\mathcal{X}$, which
means that there exists a positive constant, depending on the design, $\Cc\left(  \mathcal{X}\right) $ such that  $\Delta_n\leq \Cc\left(  \mathcal{X}\right)  $.
\end{lemma}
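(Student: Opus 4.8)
The plan is to control $\Delta_n$ through the triangle inequality, writing
\[
\Delta_n \le \left(1+w_n(x)\right)^{1/p} + \left|1+\frac{w_n(x)}{p}\right|,
\]
and to bound each summand separately, conditionally on $\mathcal{X}$. Since $\widehat g_n(x)=\widehat r_n^{1/p}(x)$ is real-valued, I work on the event $\{1+w_n(x)\ge 0\}$ on which $\Delta_n$ is defined, so that the first summand is nonnegative and equals $\widehat g_n(x)/g(x)$. The whole argument then reduces to a single estimate: a conditional upper bound, valid given $\mathcal{X}$, for $1+w_n(x)=\widehat r_n(x)/g^p(x)$.

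To obtain it I would start from the representation of Lemma~\ref{lem9}, giving $\widehat r_n(x)=\widehat s_n(x)[1+o_p(1)]$ with $\widehat s_n(x)=\sum_{i=1}^n a(X_i)Z_i$ and $Z_i=(p+1)Y_i^p\ge 0$. Dividing by $g^p(x)$ and applying the triangle inequality yields
\[
\left|\frac{\widehat s_n(x)}{g^p(x)}\right|\le (p+1)\sum_{i=1}^n|a(X_i)|\left(\frac{Y_i}{g(x)}\right)^p .
\]
The key observation is that $a(X_i)=0$ unless $|X_i-x|\le h$ (as noted in the proof of Lemma~\ref{lem12}); on this range $Y_i\le g(X_i)$, and Lemma~\ref{lem10} gives $g^p(X_i)\le g^p(x)(1+\Cu ph)$, hence $(Y_i/g(x))^p\le 1+\Cu ph$. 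Writing $M_n=\sum_{i=1}^n|a(X_i)|$, which is, conditionally on $\mathcal{X}$, a deterministic quantity bounded in $n$ (it converges to $\int |K(u)\sum_{j}u_j u^j|\,du$ by the law of large numbers, the $o_p(1)$ of Lemma~\ref{lem9} depending only on the design), I arrive at a bound of the form $|1+w_n(x)|\le (p+1)(1+\Cu ph)\,M_n[1+o_p(1)]$.

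It then remains to feed this into the two summands. Taking the $1/p$ power gives $(1+w_n(x))^{1/p}\le (p+1)^{1/p}(1+\Cu ph)^{1/p}M_n^{1/p}[1+o_p(1)]$, in which every factor tends to $1$ (using $p\to\infty$ and the boundedness of $M_n$), so this term is bounded; for the second summand, $|1+w_n(x)/p|\le 1+\frac1p(|1+w_n(x)|+1)$, whose leading contribution $\frac{p+1}{p}(1+\Cu ph)M_n$ stays bounded precisely because $ph\to 0$. Combining gives $\Delta_n\le\Cc(\mathcal{X})$. The main obstacle is the apparent divergence of the factor $(p+1)(1+\Cu ph)\approx(p+1)e^{\Cu ph}$: the estimate only closes because the $1/p$-power (in the first term) and the division by $p$ (in the second term) exactly neutralize the growing factor $p+1$, while the condition $ph\to 0$ prevents $(1+\Cu ph)$ from exploding; the sign/definedness of $(1+w_n(x))^{1/p}$ is handled by restricting to $\{1+w_n(x)\ge 0\}$.
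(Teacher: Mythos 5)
Your proposal is correct and follows essentially the same route as the paper: both reduce the problem to showing $|1+w_n(x)|=|\widehat{r}_n(x)/r_n(x)|$ is at most a design-dependent constant times $p$, via the representation of Lemma~\ref{lem9}, the bound $Y_i^p\leq g^p(X_i)$ together with Lemma~\ref{lem10}, and a law-of-large-numbers control of the total weight $\sum_i|a(X_i)|$ (the paper does this by counting the points with $|X_i-x|<h$), after which the $1/p$-power and the division by $p$ absorb the factor $p+1$. The only genuine (and welcome) refinement on your side is making explicit that one works on the event $\{1+w_n(x)\geq 0\}$ where $(1+w_n(x))^{1/p}$ is defined, a point the paper leaves implicit.
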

\begin{proof}
From inequality (\ref{4-7}), we have
\begin{align*}
\left\vert \widehat{r}_{n}\left(  x\right)  \right\vert  &  =\left\vert
e_{1}^{t}\mathbf{S}_{n}^{-1}\mathbf{X}^{t}\mathbf{W}\mathcal{Z}\right\vert \\
&  =\left\vert \frac{1}{nf\left(  x\right)  }\sum\nolimits_{i=1}^{n}%
K_{h}\left(  X_{i}-x\right)  \sum_{j=0}^{k}u_{j}\left(  \frac{X_{i}-x}%
{h}\right)  ^{j}Z_{i}\right\vert \left[  1+o_{p}\left(  1\right)  \right]  \\
&  \leq\left(  \sum_{i=1}^{n}\left\vert a\left(  X_{i}\right)
\right\vert \left(  p+1\right)  g^{p}\left(  X_{i}\right)  \right)  \left[
1+o_{p}\left(  1\right)  \right]  \\
&  =\Cu \frac{p}{h}g^{p}\left(  x\right)  \left[  1+o_{p}\left(  1\right)
\right]  \frac{1}{n}card\left\{  i:\left\vert X_{i}-x\right\vert <h\right\}.
\end{align*}
Then, the strong law of large numbers entails
\[
\left\vert \widehat{r}_{n}\left(  x\right)  \right\vert \leq r_{n}\left(
x\right)  \Cu\frac{p}{h}\left[  \PP\left(  \left\vert X-x\right\vert
<h\right)  \right]  \left[  1+o_{p}\left(  1\right)  \right],
\]
and from the continuity of the density $f$, we have 
$$
\frac{1}{2h}  \PP\left(
\left\vert X-x\right\vert <h\right)    =f\left(  x\right)  \left[
1+o\left(  1\right)  \right].
$$
Consequently,
\[
\left\vert \frac{\widehat{r}_{n}\left(  x\right)  }{r_{n}\left(  x\right)
}\right\vert <2\Cu pf\left(  x\right)  \left[  1+o_{p}\left(  1\right)
\right], 
\]
with $o_{p}\left(  1\right)$ depending on the design $\mathcal{X}$.
We thus write%
\begin{equation}
 \label{4-2}%
\frac{1}{p}\left\vert w_{n}\left(  x\right)  \right\vert =\frac{1}%
{p}\left\vert \frac{\widehat{r}_{n}\left(  x\right)  }{r_{n}\left(  x\right)
}-1\right\vert \leq C\left(  \mathcal{X}\right), 
\end{equation}
where $C\left(  \mathcal{X}\right)  $\ is a positive constant under the
conditioning by $\mathcal{X}$. As an immediate consequence, we get%
\begin{equation}
\left(  1+w_{n}\left(  x\right)  \right)  ^{1/p}-1=o_{p}\left(  1\right)
.\label{4-3}%
\end{equation}
From $\left(  \ref{4-2}\right)  $ and $\left(  \ref{4-3}\right)  $ it is clear
that $\Delta_n$ is bounded conditionally to $\mathcal{X}$.
\end{proof}

\noindent Finally, we quote two results from~\cite{JMVA2} (Lemma~5
and Lemma~4 respectively).
\begin{lemma}
\label{lem10}
If $ph\rightarrow0$, there exists a positive constant $\Cu$ such
that 
$$
g^{p}\left(  x\right)  \leq g^{p}\left(  y\right)  +\Cu g^{p}\left(
y\right)  ph
$$ 
for $\left\vert x-y\right\vert \leq h$.
\end{lemma}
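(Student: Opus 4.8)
The plan is to reduce the claimed inequality on $g^{p}$ to an elementary estimate on $g$ itself through the mean value theorem, and then to control the effect of raising to the power $p$ by exploiting the hypothesis $ph\to 0$.

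First I would dispose of the easy case: if $g(x)\le g(y)$, then $g^{p}(x)\le g^{p}(y)\le g^{p}(y)+\Cu g^{p}(y)ph$ for any $\Cu>0$, since the added term is nonnegative. So I may assume $g(x)>g(y)$. Because $g$ is $C^{1}$, its derivative is continuous, hence bounded on the (compact) range of arguments under consideration; write $M=\sup|g'|$ for such a bound. The mean value theorem gives $g(x)-g(y)=g'(\xi)(x-y)$ for some $\xi$ between $x$ and $y$, so that $0<g(x)-g(y)\le M|x-y|\le Mh$. Dividing by $g(y)\ge g_{\min}>0$ yields $g(x)/g(y)\le 1+(M/g_{\min})h$. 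Setting $c=M/g_{\min}$, this reads $g(x)/g(y)\le 1+ch$ in all cases.

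Next I would raise this to the $p$-th power and linearize. Using $1+t\le e^{t}$, we get $\left(g(x)/g(y)\right)^{p}\le(1+ch)^{p}\le e^{pch}$. Since $ph\to 0$, the exponent $pch\to 0$, and from $e^{t}\le 1+2t$ (valid for $t$ small enough) we obtain $e^{pch}\le 1+2cph$ for $n$ large. Taking $\Cu=2c=2M/g_{\min}$ and multiplying through by $g^{p}(y)$ gives $g^{p}(x)\le g^{p}(y)+\Cu g^{p}(y)ph$, as claimed.

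The only genuinely delicate point, and the reason the hypothesis $ph\to 0$ appears, is the passage from $(1+ch)^{p}$ to a bound of the form $1+O(ph)$: without $ph$ small, $(1+ch)^{p}$ need not be close to $1$ and cannot be linearized. The choice of $\Cu$ is uniform because both $M$ and $g_{\min}$ are fixed constants and the threshold in $e^{t}\le 1+2t$ is absolute; the mean value step and the trivial case are entirely elementary.
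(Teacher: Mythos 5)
Your proof is correct, but there is nothing in the paper to compare it against: the paper does not prove this lemma, it quotes it verbatim from~\cite{JMVA2} (Lemma~5 there), which is why no proof appears in the appendix. Your argument is the natural one and very likely parallels the cited proof: reduce to the ratio bound $g(x)/g(y)\leq 1+ch$ via the mean value theorem and the lower bound $g\geq g_{\min}>0$ (an assumption made in Section~\ref{preli}), then linearize the $p$-th power through $(1+ch)^{p}\leq e^{cph}\leq 1+2cph$, which is exactly where $ph\to 0$ is needed. Two small points deserve attention. First, $g\in C^{1}$ alone does not give a global bound $M=\sup|g'|$; your appeal to compactness is legitimate only because in this paper the estimation point $x$ is fixed and the lemma is invoked solely for pairs of points within distance $h$ of it (the points $u,v$ in Lemma~\ref{lem11}, or $X_{i}$ and $x$ in Lemma~\ref{lem12}), so that for large $n$ all arguments lie in a fixed compact interval around $x$; read as a statement about arbitrary $x,y\in\mathbb{R}$, your proof would require a global Lipschitz-type hypothesis on $g$, which is in effect what~\cite{JMVA2} assumes. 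Second, your inequality only holds for $n$ large enough (so that $cph$ falls below the absolute threshold in $e^{t}\leq 1+2t$); this is the correct reading of the statement, since the hypothesis $ph\to 0$ is itself asymptotic and the lemma is only ever used asymptotically, but it is worth saying explicitly that the constant $c_1=2M/g_{\min}$ is uniform while the inequality is eventual.
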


\begin{lemma}
\label{lem16}
There exists a constant $\Cs$ such that $\left\vert u\right\vert <1/2$ entails
$$
\left\vert \left(  1+u\right)  ^{1/p}-1-u/p\right\vert \leq \Cs u^{2}/p.
$$
\end{lemma}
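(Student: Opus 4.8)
The plan is to obtain the inequality directly from a second-order Taylor--Lagrange expansion of the function $\psi(u)=(1+u)^{1/p}$ about $u=0$. First I would record the first two derivatives,
\[
\psi'(u)=\frac{1}{p}(1+u)^{1/p-1},\qquad
\psi''(u)=\frac{1}{p}\left(\frac{1}{p}-1\right)(1+u)^{1/p-2},
\]
so that $\psi(0)=1$ and $\psi'(0)=1/p$. Taylor's formula with Lagrange remainder then gives, for some $\xi$ strictly between $0$ and $u$,
\[
(1+u)^{1/p}=1+\frac{u}{p}+\frac{u^{2}}{2}\,\psi''(\xi).
\]
Hence the quantity to be bounded is exactly the remainder $\tfrac{u^{2}}{2}\psi''(\xi)$, and the whole lemma reduces to controlling $|\psi''(\xi)|$ uniformly in $p$.

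The second step is to bound this remainder. Writing
\[
|\psi''(\xi)|=\frac{1}{p}\left|\frac{1}{p}-1\right|(1+\xi)^{1/p-2},
\]
I would use that for every $p\geq 1$ one has $\left|\frac{1}{p}-1\right|=1-\frac{1}{p}\leq 1$, so the first two factors contribute at most $1/p$. The hypothesis $|u|<1/2$ forces $|\xi|<1/2$, whence $1+\xi\in(1/2,3/2)$; since the exponent $1/p-2$ lies in $(-2,-1)$, the factor $(1+\xi)^{1/p-2}$ is largest at the left endpoint $1+\xi=1/2$, where it equals $2^{2-1/p}<4$, and is below $1$ once $1+\xi>1$. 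Thus $(1+\xi)^{1/p-2}\leq 4$ on the entire admissible range of $\xi$.

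Combining the two steps yields
\[
\left|(1+u)^{1/p}-1-\frac{u}{p}\right|
=\frac{u^{2}}{2}\,|\psi''(\xi)|
\leq\frac{u^{2}}{2}\cdot\frac{1}{p}\cdot 4
=\frac{2u^{2}}{p},
\]
so the statement holds with $\Cs=2$. For such an elementary estimate the only genuinely delicate point — and the step I would be most careful about — is the \emph{uniformity of $\Cs$ in $p$}: the remainder already carries one factor $1/p$ from $\psi''$, and one must verify that the two remaining factors $\left|\frac{1}{p}-1\right|$ and $(1+\xi)^{1/p-2}$ stay bounded as $p\to\infty$. This they do, being at most $1$ and at most $4$ respectively, so no sign or monotonicity analysis beyond this elementary envelope is required and the constant can indeed be taken independent of $p$.
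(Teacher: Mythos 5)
Your proof is correct. Note first that the paper itself offers no proof of this lemma: it is quoted verbatim from reference \cite{JMVA2} (Lemma~4 there), so there is no internal argument to compare against. Your self-contained derivation is the natural one and it is sound: the second-order Taylor--Lagrange expansion of $\psi(u)=(1+u)^{1/p}$ identifies the quantity to be bounded as exactly $\tfrac{u^2}{2}\psi''(\xi)$ with $\xi$ between $0$ and $u$, and your envelope estimates are all valid --- $\bigl|\tfrac1p-1\bigr|\leq 1$ for $p\geq 1$, and $(1+\xi)^{1/p-2}\leq 2^{2-1/p}\leq 4$ since the exponent is negative and $1+\xi>1/2$ --- giving the explicit constant $\Cs=2$, uniform in $p$. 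The only caveat worth stating explicitly is that your bound on $\bigl|\tfrac1p-1\bigr|$ uses $p\geq 1$; for $p\in(0,1)$ the factor $\tfrac1p\bigl|\tfrac1p-1\bigr|$ is of order $1/p^2$ and the claimed bound would fail to be uniform. Since in this paper $p=p_n\to\infty$, this restriction is harmless, but it would be clean to say ``for all $p\geq 1$'' in the statement of what you prove. You also correctly flag the one genuinely delicate point, namely uniformity of the constant in $p$, which is precisely what makes the lemma usable in the regime $p\to\infty$.
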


\bibliographystyle{plain}

\begin{figure}[p]
\centerline{\includegraphics[height=0.9\textwidth,angle=270]{}}
\centerline{(a) Best situation} 
\centerline{\includegraphics[height=0.9\textwidth,angle=270]{}}
\centerline{(b) Worst situation} 
\caption{The frontier $g$ (continuous line) and its estimation (dashed line).
The sample size is $n=500$, $X$ is uniformly distributed on $[0,1]$ and $\gamma=1$.
}
\label{fctludo1}
\end{figure}
\begin{figure}[p]
\centerline{\includegraphics[height=0.9\textwidth,angle=270]{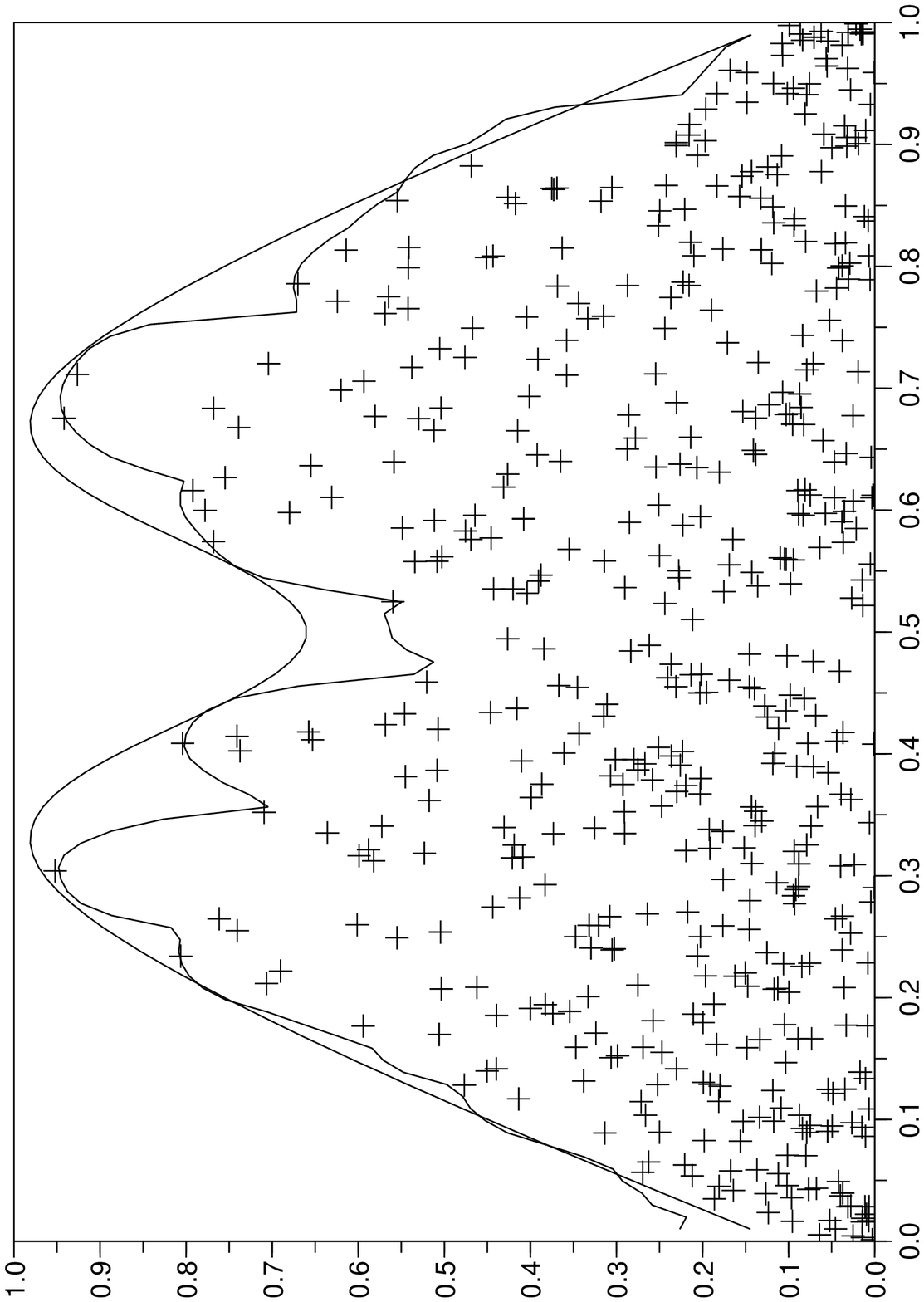}}
\centerline{(a) Best situation} 
\centerline{\includegraphics[height=0.9\textwidth,angle=270]{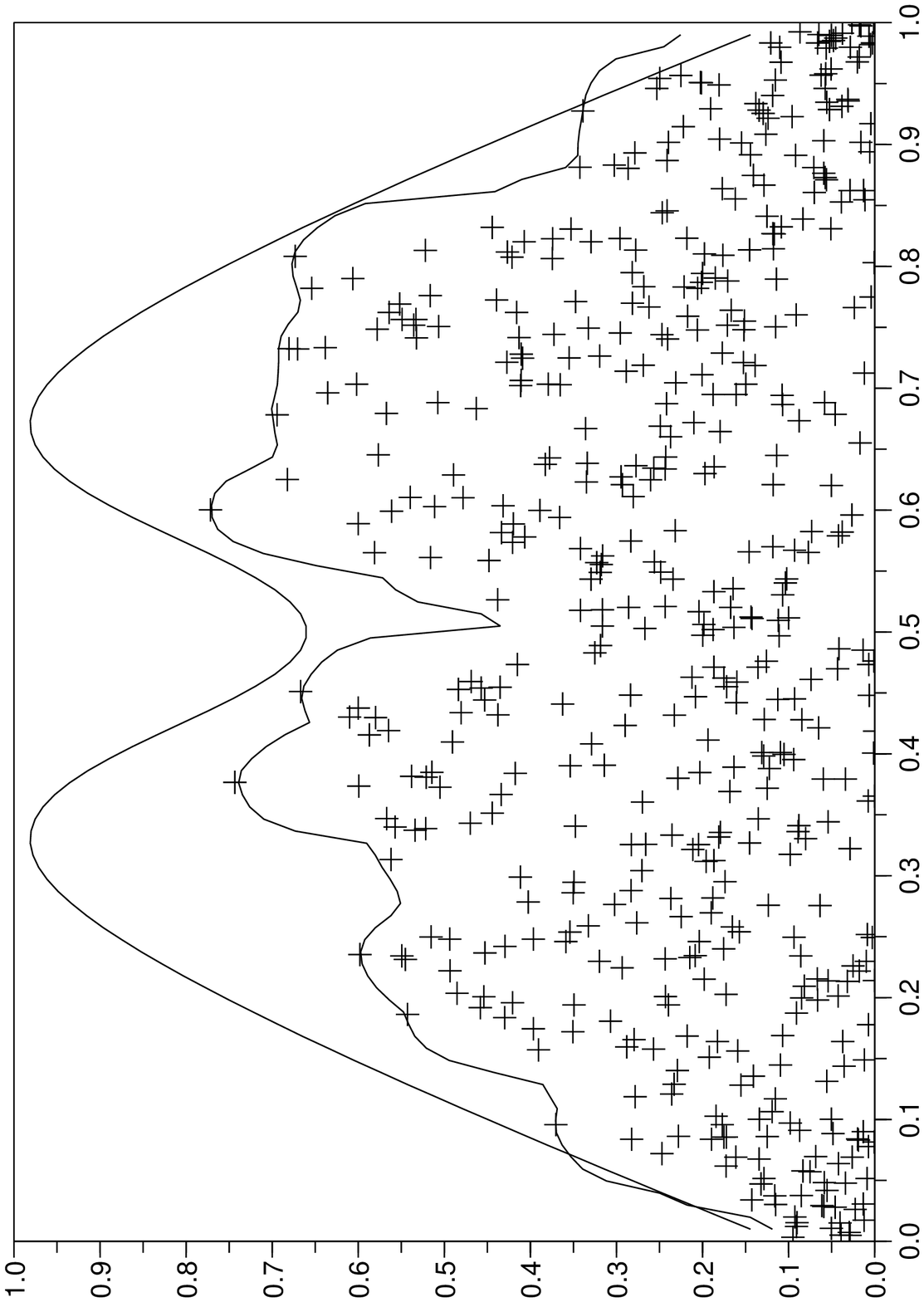}}
\centerline{(b) Worst situation} 
\caption{The frontier $g$ (continuous line) and its estimation (dashed line).
The sample size is $n=500$, $X$ is uniformly distributed on $[0,1]$ and $\gamma=2$.
}
\label{fctludo2}
\end{figure}
\begin{figure}[p]
\centerline{\includegraphics[height=0.9\textwidth,angle=270]{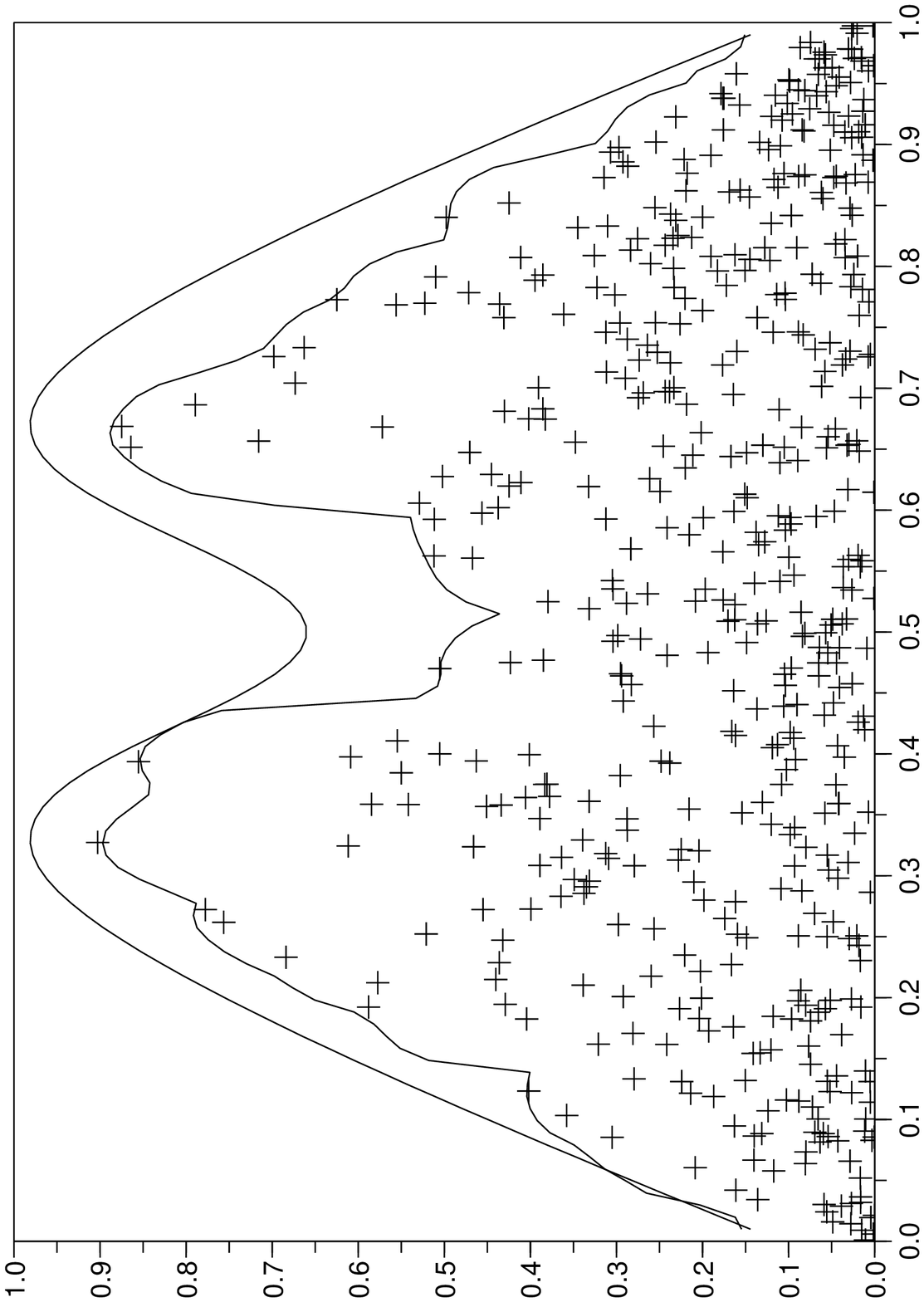}}
\centerline{(a) Best situation} 
\centerline{\includegraphics[height=0.9\textwidth,angle=270]{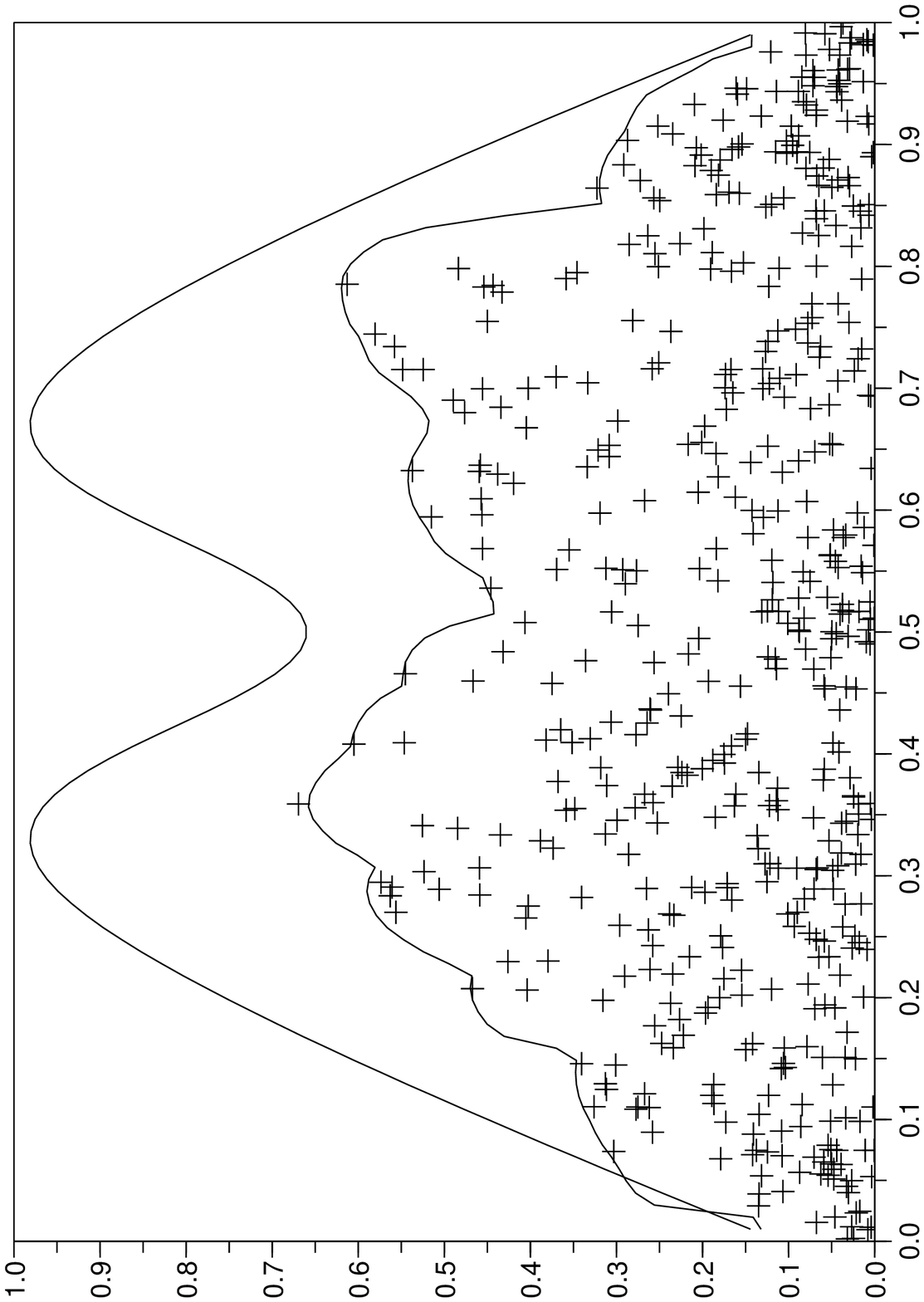}}
\centerline{(b) Worst situation} 
\caption{The frontier $g$ (continuous line) and its estimation (dashed line).
The sample size is $n=500$, $X$ is uniformly distributed on $[0,1]$ and $\gamma=3$.
}
\label{fctludo3}
\end{figure}

\end{document}